\keywords{proof theory, cut-elimination, labelled sequents, nested sequents, modal logic, intermediate logics}
\newcommand{\REL}[1][]{\mathcal{R}#1,\mathcal{E}}
\newcommand{\DC}{\vdash}
\newcommand{\LTSE}{\text{LTSE}}
\newcommand{\labKst}{\text{\textbf{LSEqK*}}}
\newcommand{\INSK}{\text{\textbf{INSK}}}
\newcommand{\INSKst}{\text{\textbf{INSK*}}}
\newcommand{\labI}{\text{\textbf{LSEqIp}}}
\newcommand{\labK}{\text{\textbf{LSEqK}}}
\newcommand{\imp}{\rightarrow}
\newcommand{\INSI}{\text{\textbf{INSIp}}}
\newcommand{\LTS}{\text{LTS}}
\renewcommand{\Box}{\square}
\newcommand{\diam}{\Diamond}
\newcommand{\context}{}
\newcommand{\contexta}{}
\newcommand{\contextb}{}
\newcommand{\bl}{}
\newcommand{\ob}{\{}
\newcommand{\cb}{\}}
\newcommand{\RR}{\mathsf{R}}
\newcommand{\RRc}{\hat{\mathsf{R}}}
\newcommand{\sub}[2]{\{#1/#2\}}
\newcommand{\g}{g}
\renewcommand{\L}{L}
\newcommand{\I}{I}
\begin{document}

\title[Inducing syntactic cut-elimination for indexed nested sequents]{Inducing syntactic cut-elimination for indexed nested sequents\rsuper*}
\titlecomment{{\lsuper*}This is an extended version of the conference paper~\cite{Ramijcar16}.\\
Work supported by the FWF projects: START Y544-N23 and I 2982.}

\author{Revantha Ramanayake}
\address{Technische Universit\"{a}t Wien, Vienna, Austria.}
\email{revantha@logic.at}

\begin{abstract}
The key to the proof-theoretic study of a logic is a proof calculus with a subformula property. 
Many different proof formalisms have been introduced (e.g. sequent, nested sequent, labelled sequent formalisms) in order to provide such calculi for the many logics of interest. The nested sequent formalism was recently generalised to indexed nested sequents in order to yield proof calculi with the subformula property for extensions of the modal logic~K by (Lemmon-Scott) Geach axioms. The proofs of completeness and cut-elimination therein were semantic and intricate. Here we show that derivations in the labelled sequent formalism whose sequents are `almost treelike' correspond exactly to indexed nested sequents. This correspondence is exploited to induce syntactic proofs for indexed nested sequent calculi making use of the elegant proofs that exist for the labelled sequent calculi. A larger goal of this work is to demonstrate how specialising existing proof-theoretic transformations alleviate the need for independent proofs in each formalism. Such coercion can also be used to induce new cutfree calculi. We employ this to present the first indexed nested sequent calculi for intermediate logics. 
\end{abstract}

\maketitle

\section{Introduction}\label{sec_intro}

A standard syntactic method of presenting a logic is via its Hilbert (proof) calculus. The Hilbert proof calculus for intuitionistic logic typically consists of a finite set of axiom schemata and a single rule called \textit{modus ponens}.
\[
  \AxiomC{$A$}
  \AxiomC{$A\imp B$}
  \RightLabel{\textit{modus ponens}}
  \BinaryInfC{$B$}
  \DisplayProof
\]
A proof (\textit{derivation}) is a finite sequence of formulae such that every element is either an instance of an axiom or obtainable from previous elements via \textit{modus ponens}.
A formula is called a \textit{theorem} of the logic if it has a derivation. This presentation of the logic is simple and elegant. It is also modular: a Hilbert calculus for classical logic can be obtained from the Hilbert calculus for intuitionistic logic by the addition of a single axiom schema $p\lor\lnot p$ corresponding to the \textit{excluded middle}.

However, the rule of \textit{modus ponens} has a great drawback. To investigate if a formula~$B$ that is not an instance of an axiom is a theorem, it is neccesary to investigate if there is some~$A$ and~$A\imp B$ that is a derivable (since~$B$ must be the conclusion of a \textit{modus ponens} rule instance), and there is no indication of what~$A$ we should choose for this purpose. Indeed, in terms of formula size, $A$~may be much larger than~$B$.  In other words, even if~$B$ \textit{is} a theorem, the formula appears to contain little information on the form of its derivation.


Gentzen~\cite{Gen69} showed that there is an alternative, presenting proof calculi for classical and intuitionistic logic in a new formalism which he called the \textit{sequent calculus}. Unlike the afore-mentioned Hilbert calculi, these sequent calculi satisfied the \textit{subformula property}: every formula in the premise of a rule is a (sub)formula of some formula in the conclusion of the rule. When every rule in the proof system satisfies the subformula property, then the only formulae that may appear in a derivation are the subformulae of the formula that is being proved. Thus the subformula property places a strong restriction on the set of possible derivations of a given formula. 
Apart from facilitating backward proof search, the subformula property (and the ensuing restriction on the derivations) enable elegant and constructive proof-theoretic arguments of logical properties such as consistency, decidability, complexity and interpolation. We note in passing that weaker versions of the subformula property---e.g. that every formula in the derivation belonging to a finite set computable from the formula that is being proved---would also suffice.

The building block of the sequent calculus is not a formula like in the Hilbert calculus, but instead a tuple of formula multisets (formula lists in the original formulation), written $X\DC Y$ (we shall call this a \textit{traditional sequent}). The sequent calculus consists solely of rules; every rule consists of some number of premises and a conclusion, and each is a traditional sequents. The starting point for a derivation is now a rule with no premises (an \textit{initial sequent}). In particular, Gentzen traded the axiom schemata of the Hilbert calculus for rule schemata which either add a single logical connective in the conclusion in positive/negative position, or do not introduce introduce any new logical connectives. The point is that these rules satisfy the subformula property. With these new rules and sequent-version of \textit{modus ponens}, every theorem could be derived. However since this sequent calculus contains \textit{modus ponens} it still violates the subformula property. To establish that eliminating \textit{modus ponens} does not limit the expressivity of the calculus, Gentzen proved a stronger result by first replacing \textit{modus ponens} with the \textit{cut rule} below and then proving his celebrated \textit{cut-elimination theorem} which shows how to eliminate the cut rule from any derivation, thus showing that the cut rule is redundant. 
\[
  \AxiomC{$X\DC Y,A$}
  \AxiomC{$A,U\DC V$}
  \RightLabel{cut}
  \BinaryInfC{$X,U\DC Y,V$}
  \DisplayProof
\]
Gentzen used the cut-elimination result to give a proof of the consistency of arithmetic using a suitable induction principle. 

Following the seminal work of Gentzen---and recognising its proof-theoretic importance---efforts were made to construct sequent calculi with the subformula property for the many non-classical logics of interest. Unfortunately, in most cases, the restrictions that the sequent calculus formalism enforces (most prominently, the form of the traditional sequent as a formula multiset/list tuple) does not provide enough leeway to construct rules to capture these logics. In response, various extensions of the sequent calculus formalism have been proposed (some formalisms not discussed in this paper include hypersequents~\cite{Pot83,Avr96} and the display calculus~\cite{Bel82}). The typical approach is to extend/generalise the traditional sequent data structure (an exception is the calculus of structures~\cite{Gug07}).

%
The \textit{nested sequent}~\cite{Kas94,Bru06} formalism has been used to give proof calculi for intuitionistic logic~\cite{Fit12}, conditional logics~\cite{AleOliPoz16}, logics in the classical and intuitionistic modal cube~\cite{Kas94,Str13FOSSACS,MarStr14AIML} and path axiom extensions of classical modal logic~\cite{GorPosTiu11}.
A nested sequent replaces the traditional sequent data structure with a richer data structure.
\begin{defi}[nested sequent]
A \emph{nested sequent} (NS) is a finite object defined recursively as follows:
\begin{align*}
&\text{NS} := X\DC Y\text{ where $X$ and~$Y$ are formula multisets}	\\
&\text{NS} := \text{NS}, [\text{NS}], \ldots, [\text{NS}]
\end{align*}
\end{defi}
\noindent Here are some examples of nested sequents.
\begin{align}
&X\DC Y, [P\DC Q, [U\DC V], [L\DC M]], [S\DC T] \label{NSone} \\
&X\DC Y , [U\DC V , [L\DC M],[S\DC T]]  \label{NStwo}
\end{align}
A nested sequent lends itself naturally to a graphical representation as a directed tree where each node is decorated with a traditional sequent.
Here are the representations of the preceding two nested sequents:
\medskip
\begin{center}
\begin{tiny}
\begin{tabular*}{0.75\textwidth}{@{\extracolsep{\fill}}cc}
\xymatrix{
U\DC V& L\DC M \\
P\DC Q\ar[u]\ar[ur] & S\DC T\\
X\DC Y\ar[u]\ar[ur] & 
}
&
\xymatrix{
L\DC M&S\DC T\\
U\DC V\ar[u]\ar[ur] &\\
X\DC Y\ar[u] &
}
\end{tabular*}
\end{tiny}
\end{center}
\medskip
Unfortunately, even the nested sequent formalism is insufficient to provide calculi with the subformula property for many logics of interest.
In the context of modal logic, the axiomatic extension of the normal modal logic~$K$ by \textit{Geach axioms} is such a class of logics.
One reason for the interest is that these axioms capture many interesting reasoning properties (e.g. reflexivity, symmetry, transitivity, confluence). Another reason is that these logics possess an elegant semantics: Kripke frames~\cite{BlaRijVen01} which satisfy the corresponding first-order property.
\begin{defi}[Geach logics and axioms]
A \emph{Geach} logic is an axiomatic extension of the normal modal logic~$K$ by finitely many (Geach) axioms of the form:
\begin{equation}\label{Geach_axioms}
G(h,i,j,k):= \diam^{h}\Box^{i} p\imp \Box^{j}\diam^{k} p\quad (h,i,j,k\geq 0)
\end{equation}
\end{defi}
For some Geach logics (e.g. $K+\diam\Box p\imp\Box\diam p$) even the nested sequent formalism (ultimately, the nested sequent data structure) appears not expressive enough to provide a calculus with the subformula property.

Recently, Fitting~\cite{Fit15} extended the nested sequent data structure by assigning an index to each traditional sequent (permitting multiple traditional sequents to possess the same index) and introduced the \textit{indexed nested sequent} formalism to present proof calculi with the subformula property for those Geach logics whose Geach axioms satisfy~$i+k>0$.
\begin{defi}[indexed nested sequent]
An \emph{indexed nested sequent}  (INS) is a finite object defined recursively as follows:\par
\begin{align*}
&\text{INS} := X\DC^{a} Y\text{ where $X$ and~$Y$ are formula multisets and $a\in\mathbb{N}$}	\\
&\text{INS} := \text{INS}, [\text{INS}], \ldots, [\text{INS}]
\end{align*}
\end{defi}
Here are some examples of indexed nested sequents:
\begin{align}
&X\DC^{0} Y, [P\DC^{1} Q, [U\DC^{2} V], [L\DC^{3} M]], [S\DC^{3}  T]\label{INSone}	\\
&X\DC^{0} Y, [P\DC^{1} Q, [U\DC^{0} V], [L\DC^{1} M]], [S\DC^{2} T]\label{INStwo}
\end{align}
An indexed nested sequent can be represented graphically as a directed tree whose nodes are decorated by traditional sequents and an index (the first two graphs below represent~(\ref{INSone}) and~(\ref{INStwo}) respectively.
\begin{center}
  \vspace{-0.75em}
\begin{tiny}
\begin{tabular}{cc@{\hspace{1em}}|@{\hspace{1em}}cc}
\xymatrix{
U\DC^{2} V& L\DC^{3} M \\
P\DC^{1} Q\ar[u]\ar[ur] & S\DC^{3} T\\
X\DC^{0} Y\ar[u]\ar[ur] & 
}
&
\xymatrix{
U\DC^{0} V& L\DC^{1} M \\
P\DC^{1} Q\ar[u]\ar[ur] & S\DC^{2} T\\
X\DC^{0} Y\ar[u]\ar[ur] & 
}
&
\xymatrix{
U\DC^{2} V& \\
P\DC^{1} Q\ar[u]\ar[r] & S,L\DC^{3} T,M\\
X\DC^{0} Y\ar[u]\ar[ur] & 
}
&
\xymatrix{
& \\
P,L\DC^{1} Q,M\ar@(dl,ul)[d]\ar@(ur,ul) & S\DC^{2} T\\
X,U\DC^{0} Y,V\ar[u]\ar[ur] & 
}
\end{tabular}
\end{tiny}
\end{center}
\medskip
The key to obtaining more expressivity is to interpret those nodes of the tree with the same index ($X_{1}\DC^{a} Y_{1},\ldots,X_{N+1}\DC^{a}Y_{N+1}$, say) as a \textit{single} node decorated with $X_{1},\ldots,X_{N+1}\DC Y_{1},\ldots,Y_{N+1}$.
Thus INS correspond to \textit{directed graphs} obtained by conflating certain nodes of a directed tree. The INS~(\ref{INSone}) and~(\ref{INStwo}) are depicted by the two rightmost graphs above.

Fitting does not prove \textit{syntactic cut-elimination} for his indexed nested sequent calculi for Geach logics, but instead establishes that the calculus minus the cut rule is complete with respect to the corresponding logic's semantics. Such a proof is called a \textit{semantic proof of cut-elimination} and contrasts with the syntactic proofs \`{a} la Gentzen which rely solely on the proof calculus.

Apart from the technical interest in syntactic proofs of cut-elimination (after all, proof-theory is concerned primarily with the syntax), syntactic proofs yield a constructive procedure so the cutfree derivation is related in a formal sense to the original derivation. However, the downside of syntactic proofs is that they tend to be highly technical and difficult to verify. 
We believe that the best response is to reuse and whenever possible adapt those syntactic proofs that are already in existence rather than constructing new proofs from scratch. In Ramanayake~\cite{Ramijcar16} the first proof of syntactic cut-elimination for INS calculi was presented by inducing the existing results from another well-known formalism called the \textit{labelled sequent calculus} (introduced in Section~\ref{sec_prelim}). In particular, it was shown that labelled sequent derivations restricted to labelled sequents that are `almost treelike' (labelled tree sequents with equality (LTSE) in the formal terminology of this paper) correspond exactly to indexed nested sequents. This is the technical machinery observed by Fitting~\cite{Fit15} as `a significant different direction' in the study of indexed nested sequents.
The first INS calculi for intermediate logics were introduced using this method, and also INS calculi for the Geach logics omitted in~\cite{Fit15} i.e. Geach axioms with $i+k=0$.
Incidentally, we are not aware of any existing NS calculi for intermediate logics.
Subsequently, Marin and Stra\ss burger~\cite{MarStr17} presented a syntactic proof of cut-elimination for classical and intuitionistic modal logics.

The present work extends~\cite{Ramijcar16} with extensive proof details (omitted from the conference version due to space constraints), and further discussion and examples.

The only other syntactic proof of cut-elimination for indexed nested sequents at present is Marin and Stra\ss burger~\cite{MarStr17}. Let us briefly compare the \textit{procedural} differences in that approach with Ramanayake~\cite{Ramijcar16}. 
Note that this is a comment on the method of argument (and hence a relevant consideration for any further work), rather than on the final cut free derivation which is essentially the same in both cases because both arguments eliminate the cut rule via Gentzen-style reductions.
Since~\cite{Ramijcar16} is a \textit{specialisation} of the existing labelled sequent proof from Negri~\cite{Neg05}, the steps and lemmata that are required are the same. The key thing to check in each lemma is that those labelled sequent transformations preserve LTSE-derivations (i.e. when the input is a LTSE-derivation, the output needs to be a LTSE-derivation too). There is one key point where the proof in~\cite{Neg05} does not preserve LTSE-derivations: that is the substitution lemma. The solution employed is to prove a more nuanced but still adequate substitution lemma (Lem.~\ref{lem-weaksubs}). Note that some care was also required with the weakening lemma (Lem.~\ref{lem_wk}). Notwithstanding these points, we believe that the reader who is familiar with the well-established proof in~\cite{Neg05} would be on familiar ground throughout. In our opinion, this is a major attraction of the specialisation approach. Meanwhile, in~\cite{MarStr17}, the proof of syntactic cut-elimination for nested sequents is \textit{extended} to indexed nested sequents. While the former provides a guide, indexed nested sequents are more complicated so novelty and care are required to handle and check the additional complications. For example, \cite{MarStr17}~proceeds by constructing a variant calculus (replacing the standard initial sequents with a new version where the positive and negative propositional variable may occur at different nestings so long as they have the same index; making the $\diam$-rule non-local) to facilitate their argument. 

\section{Preliminaries}\label{sec_prelim}

The set of natural numbers is denoted by~$\mathbb{N}$.
We assume a set $\{p,q,r,\ldots\}$ of propositional variables. A formula in the language of classical or intermediate logic is either a propositional variable or the logical constants~$\bot, \top$ or has the form $A\star B$ where~$A$ and~$B$ are formulae and $\star\in\{\lor,\land,\imp\}$. Formulae in the language of modal logic can also have the form~$\Box A$ and~$\diam A$ where~$A$ is a formula.
The \textit{size} of a formula is the sum of the number of connectives and propositional variables and logical constants it contains.
\subsection*{Notation}

We employ the following naming convention for variables.
\begin{itemize}
\item $A,B,C$ denote formulae. $p$ denotes a propositional variable.

\item $X,Y,U,V,P,Q,L,M,S,T$ are used to denote (labelled) formula multisets. The exception is with NS and INS:
\begin{itemize}
\item In a NS~$X\DC Y$, $X$~denotes a formula multiset and $Y$~denotes a NS. 
\item In a INS~$X\DC Y$, $X$~denotes a formula multiset and $Y$~denotes a INS. 
\end{itemize}

\item $x,y,z,u,v,e,\alpha,\beta$ are used for state variables in the LS.

\item $a,b,c,d$ are used for indices in the INS.

\item $\Gamma\{ \,\}$ is an INS context. Thus $\Gamma\{ X\DC^{a} Y\}$ is an INS containing the INS~$X\DC^{a} Y$.

\item $\mathcal{R}$ and~$\mathcal{E}$ are used to denote multisets of relation terms and equality terms respectively.
\end{itemize}
Now we will present the formal definitions and relevant results for the formalisms introduced in the previous section.

\subsection{Labelled sequents}

The labelled sequent formalism~\cite{Fit83,Min97} extends the traditional sequent by the prefixing of \textit{state variables} to formulae (e.g. $x:A\imp B$ and~$y:p$) and the inclusion of terms for the binary predicate~$R$. 
For the formal presentation, assume that we have at our disposal an infinite set $\mathbb{SV}=\{ x_{1},x_{2},\ldots\}$ of
\textit{state variables} disjoint from the set of propositional
variables. A \textit{labelled formula} has the form $x:A$ where~$x\in\mathbb{SV}$ and~$A$ is a formula.  If $X=\{ A_{1},\ldots
A_{n} \}$ is a formula multiset, then $x:X$ denotes the multiset $\{
x:A_{1},\ldots,x:A_{n} \}$ of labelled formulae. Notice that if the
formula multiset~$X$ is empty, then the labelled formula multiset $x:X$ is also empty.
A \textit{relation mset}~$\mathcal{R}$ is a multiset of relation terms $Rxy$ ($x,y\in\mathbb{SV}$).
An \textit{equality mset}~$\mathcal{E}$ is a multiset of equality terms $x=y$ ($x,y\in\mathbb{SV}$).
An \textit{atomic term} is either a relation or equality term.
\begin{defi}[labelled sequent]
A \emph{labelled sequent} (LS) has the form $\REL,X\DC Y$ where~$\mathcal{R}$ is a relation mset, $\mathcal{E}$ is an equality mset and~$X$ and~$Y$ are multisets of labelled formulae.
\end{defi}
Intuitively, the labelled sequent can be used to express restricted first-order statements about formulae.
A labelled sequent can also be viewed as a directed graph whose nodes are decorated by traditional sequents~\cite{Res06}.
\begin{defi}[graph defined by relation mset]\label{def-R-graph}
The graph~$\g(\mathcal{R})$ defined by relation mset~$\mathcal{R}$ is the directed graph whose nodes are the state variables in~$\mathcal{R}$ and $x\rightarrow y$ is a directed edge in~$\g(\mathcal{R})$ iff $Rxy\in\mathcal{R}$.
\end{defi}

\begin{defi}[rule, calculus derivation, height]
A \emph{rule} is a non-empty sequence of LSs, typically written as $(s_{1},\ldots,s_{N}/s_{N+1})$. The sequent~$s_{N+1}$ is called the \emph{conclusion} of the rule, the remaining LSs are called the \emph{premises} of the rule. 
If~$N=0$ then the rule is called an \emph{initial LS}. A calculus consists of a finite set of rules. A \emph{LS calculus} is a finite set of LS rules.
A \emph{derivation} in the calculus is defined recursively as either an initial LS or the object obtained by applying a rule~$\rho$ in the calculus to smaller derivations whose bottommost LSs (`endsequents') are legal premises of~$\rho$. 

The \emph{height} of a derivation is the number of rules on its longest branch (viewing the derivation as a tree whose nodes are LSs and the root is the endsequent).
\end{defi}
These definitions are standard and apply to the other types of sequents that we introduce in this paper by replacing ``LS" with the desired sequent type.

Negri~\cite{Neg05} has presented a method for generating cutfree and contraction-free LS calculi for the large family of modal logics whose Kripke semantics are defined by geometric (first-order) formulae.
The proof of cut-elimination is general in the sense that it applies uniformly to every modal logic defined by geometric formulae.
This result has been extended to labelled sequent calculi for intermediate and other non-classical logics~\cite{DycNeg12} and indeed to arbitrary first-order formulae~\cite{DycNeg15} via the addition of constants.
See also Vigan\`{o}~\cite{Vig00} where non-classical logics with semantics defined by Horn formulae are investigated using cutfree labelled calculi introduced therein.

While the primary LS calculus~$\text{\textbf{G3K}}$ in~\cite{Neg05} does not contain equality terms, the extension of that calculus to a theory of equality is presented there (under the name~$\text{\textbf{G3K-Eq}}$). The LS calculus~$\labK$ Fig.~\ref{fig_labK} is obtained from the latter. We have added a new rule~(ls-sc) in anticipation of the INS rule~(sc) in~$\INSK$ that we wish to simulate. The rules~$\text{Repl}_{R1}$ and~$\text{Repl}_{R2}$ from~$\text{\textbf{G3K-Eq}}$ have been omitted. In this paper with LS sequents whose relation mset is a tree: then the former rule can never be applied and the the latter rule is subsumed by~(ls-sc).

\begin{figure}

\begin{small}
\begin{tabular}{cc}
\AxiomC{}
\RightLabel{(init-$\bot$)}
\UnaryInfC{$\REL,x:\bot,X\DC  Y $}
\DisplayProof
&
\AxiomC{}
\RightLabel{(init)}
\UnaryInfC{$\REL,x:p,X\DC  Y , x:p$}
\DisplayProof
\\[1em]
\AxiomC{$\REL,x:A,X\DC  Y $}
\AxiomC{$\REL,x:B,X\DC  Y $}
\RightLabel{(${\lor}$l)}
\BinaryInfC{$\REL,x:A\lor B, X\DC  Y $}
\DisplayProof
&
\AxiomC{$\REL,X\DC  Y ,x:A,x:B$}
\RightLabel{(${\lor}$r)}
\UnaryInfC{$\REL,X \DC  Y ,x:A\lor B$}
\DisplayProof
\\[1.5em]
\AxiomC{$\REL,x:A,x:B,X\DC  Y $}
\RightLabel{(${\land}$l)}
\UnaryInfC{$\REL,x:A\land B,X\DC  Y $}
\DisplayProof
&
\AxiomC{$\REL,X\DC  Y ,x:A$}
\AxiomC{$\REL,X\DC  Y ,x:B$}
\RightLabel{(${\land}$r)}
\BinaryInfC{$\REL,X\DC  Y ,x:A\land B$}
\DisplayProof
\\[1.5em]
\AxiomC{$\REL,X\DC  Y ,x:A$}
\AxiomC{$\REL,x:B,X\DC  Y $}
\RightLabel{(${\imp}$l)}
\BinaryInfC{$\REL,x:A\imp B,X\DC  Y $}
\DisplayProof
&
\AxiomC{$\REL,x:A,X\DC  Y ,x:B$}
\RightLabel{(${\imp}$r)}
\UnaryInfC{$\REL,X\DC  Y ,x:A\imp B$}
\DisplayProof
\\[1.5em]
\AxiomC{$\REL[,Rxy],x:\Box A,y:A,X\DC  Y $}
\RightLabel{(${\Box}$l)}
\UnaryInfC{$\REL[,Rxy],x:\Box A,X\DC  Y $}
\DisplayProof
&
\AxiomC{$\REL[,Rxv],X\DC  Y ,v:A$}
\RightLabel{(${\Box}$r)}
\UnaryInfC{$\REL,X\DC Y ,x:\Box A$}
\DisplayProof
\\[1.5em]
\AxiomC{$\REL[,Rxv],v:A,X\DC  Y $}
\RightLabel{(${\diam}$l)}
\UnaryInfC{$\REL,x:\diam A,X\DC Y $}
\DisplayProof
&
\AxiomC{$\REL[,Rxy],X\DC  Y , x:\diam A, y:A$}
\RightLabel{(${\diam}$r)}
\UnaryInfC{$\REL[,Rxy],X\DC  Y , x:\diam A$}
\DisplayProof
\\[1.5em]
\AxiomC{$\REL,x=y,x:A,y:A,X\DC Y $}
\RightLabel{(rep-l)}
\UnaryInfC{$\REL,x=y,x:A,X\DC Y $}
\DisplayProof
&
\AxiomC{$\REL,x=y,X\DC Y ,x:A,y:A$}
\RightLabel{(rep-r)}
\UnaryInfC{$\REL,x=y,X\DC Y ,x:A$}
\DisplayProof
\\[1.5em]
\AxiomC{$\REL, x=x,X\DC Y $}
\RightLabel{(eq-ref)}
\UnaryInfC{$\REL,X\DC Y $}
\DisplayProof
&
\AxiomC{$\REL, x=z, x=y, y=z, X\DC Y $}
\RightLabel{(eq-trans)}
\UnaryInfC{$\REL, x=y, y=z,X\DC Y $}
\DisplayProof
\\[1.5em]
\multicolumn{2}{c}{
\AxiomC{$\REL[,Rxy,Ruv],x=u,y=v,X\DC Y $}
\RightLabel{(ls-sc) $v$ not in conclusion}
\UnaryInfC{$\REL[,Rxy],x=u,X\DC Y $}
\DisplayProof
}
\end{tabular}
\end{small}

\caption{The labelled sequent calculus~$\labK$. (${\Box}$r), (${\diam}$l) and (ls-sc) have the variable restriction: $v$~does not appear in the conclusion.}
\label{fig_labK}
\end{figure}

Here is a derivation of the normality axiom in~$\labK$.
\begin{equation}\label{lab-der-normal-axiom}
\text{
\AxiomC{$Rxy,x:\Box A,y:A\DC y:B, y:A$}
\AxiomC{$Rxy,y:B,x:\Box A,y:A\DC y:B$}
\RightLabel{(${\imp}$l)}
\BinaryInfC{$Rxy,x:\Box A,y:A,x:\Box(A\imp B),y:A\imp B\DC y:B$}
\RightLabel{($\Box$l), ($\Box$l)}
\UnaryInfC{$Rxy,x:\Box A,x:\Box(A\imp B)\DC y:B$}
\RightLabel{($\Box$r)}
\UnaryInfC{$x:\Box A,x:\Box(A\imp B)\DC x:\Box B$}
\RightLabel{(${\imp}$r)}
\UnaryInfC{$x:\Box(A\imp B)\DC x:(\Box A\imp \Box B)$}
\RightLabel{(${\imp}$r)}
\UnaryInfC{$\DC x:\Box(A\imp B)\imp (\Box A\imp \Box B)$}
\DisplayProof
}
\end{equation}

\begin{defi}[geometric formula]
A \emph{geometric formula} is a formula in the first-order language (binary relations~$R$, ${=}$) of the following form where the~$P_{i}$ are atomic formulae and~$\hat{Q}_{j}$ is a conjunction $Q_{j1}\land \ldots\land Q_{jk_{j}}$ of atomic formulae:
\begin{equation}\label{eq_geo_axiom_eq}
\forall\bar{z}(P_{1}\land\ldots\land P_{m}\imp \exists \bar{x}(\hat{Q}_{1}\lor\ldots\lor \hat{Q}_{n}))
\end{equation}
\end{defi}

\begin{thm}[Negri]\label{thm_Negri}
Let~$L$ be a modal logic of Kripke frames satisfying the geometric formulae~$\{\alpha_{i}\}_{i\in I}$. Then $\labK+\{\rho_{i}\}_{i\in I}$ is a LS
calculus for~$L$ where~$\rho_{i}$ is a structural rule of the form GRS below corresponding to the geometric formulae~(\ref{eq_geo_axiom_eq}).
\[
  \AxiomC{$Q_{1}\sub{\bar{y}_{1}}{\bar{x}_{1}},P,\REL,X\DC Y$}
  \AxiomC{$\ldots$}
  \AxiomC{$Q_{N}\sub{\bar{y}_{N}}{\bar{x}_{N}},P,\REL,X\DC Y$}
  \RightLabel{GRS}
  \TrinaryInfC{$P,X\DC Y$}
  \DisplayProof
\]
Here $Q_{j}=Q_{j1},\ldots,Q_{jk_{j}}$ and $P=P_{1},\ldots,P_{m}$. Moreover, the variables $\bar{y}_{1}\ldots,\bar{y}_{n}$ do not appear in the conclusion.
\end{thm}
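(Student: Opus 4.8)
The plan is to prove the two halves of the claim that ``$\labK + \{\rho_{i}\}_{i\in I}$ is a labelled sequent calculus for~$L$'' separately: \emph{soundness}, that every sequent derivable in $\labK + \{\rho_{i}\}_{i\in I}$ is valid on the class of Kripke frames satisfying all the~$\alpha_{i}$, and \emph{completeness}, that every formula valid on that class---equivalently, every theorem of~$L$---is derivable. Both directions go through a semantics for labelled sequents. Fix a Kripke model $M=(W,R^{M},V)$ whose frame satisfies each~$\alpha_{i}$, together with a map $\iota\colon\mathbb{SV}\to W$; say that $M,\iota$ \emph{satisfies} $\REL,X\DC Y$ if, whenever $\iota$ makes every term of~$\mathcal{R}$ hold in~$R^{M}$, every term $x=y$ of~$\mathcal{E}$ hold as $\iota(x)=\iota(y)$, and every labelled formula of~$X$ true, then some labelled formula of~$Y$ is true; a sequent is \emph{valid} if it is satisfied by all such $M,\iota$. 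One checks routinely that $A\in L$ iff $\DC x_{0}:A$ is valid, so it suffices to relate derivability and validity.

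For soundness I would argue by induction on the height of the derivation. The initial sequents are plainly valid, and the propositional and modal rules of~$\labK$ preserve validity by the standard arguments---the freshness side condition on the ($\Box$r) rule is exactly what permits reassigning~$\iota$ on the eigenvariable. The equality rules (rep-l), (rep-r), (rep-R1), (rep-R2) and (ls-sc) are sound because the presence of $x=y$ in the antecedent forces $\iota(x)=\iota(y)$, so duplicating or rerouting relation and formula occurrences along such an equality is semantically inert; the omitted reflexivity and transitivity rules for equality are not needed for sequents with no equality term in the succedent. The one genuinely new case is $\rho_{i}=\mathrm{GRS}$: if $M,\iota$ satisfies every premise but the antecedent condition $P$ of the conclusion holds under~$\iota$, then since the frame satisfies~$\alpha_{i}$ there is a witnessing tuple in~$W$; redefining~$\iota$ on the eigenvariables $y_{1},\dots,y_{n}$ to point at those witnesses---legitimate precisely because the eigenvariables are absent from the conclusion---activates one premise and delivers the conclusion.

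For completeness I would use the countermodel construction from a failed proof search. Assuming $\DC x_{0}:A$ is not derivable, build an exhaustive (fair) proof-search tree in $\labK + \{\rho_{i}\}_{i\in I}$ rooted at this sequent, applying along every branch every applicable instance of every rule---including every instance of each~$\rho_{i}$ as new state variables appear. Underivability of the endsequent yields an open branch~$\mathcal{B}$; let $\mathcal{R}^{\ast},\mathcal{E}^{\ast},X^{\ast},Y^{\ast}$ be the unions of the corresponding components along~$\mathcal{B}$. Build a model whose worlds are the $\mathcal{E}^{\ast}$-equivalence classes of state variables occurring on~$\mathcal{B}$, with accessibility induced by~$\mathcal{R}^{\ast}$ and with $p$ true at $[x]$ when $x:p\in X^{\ast}$; the replacement rules guarantee well-definedness on classes. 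Exhaustiveness with respect to the logical rules gives a Truth Lemma (every labelled formula in~$X^{\ast}$ is true and every one in~$Y^{\ast}$ is false at its world), while exhaustiveness with respect to the~$\rho_{i}$ guarantees the frame satisfies every~$\alpha_{i}$---this is the step for which $\mathrm{GRS}$ is designed. Hence $A$ is false at $[x_{0}]$ in a model of~$L$, so $A\notin L$; contraposing gives derivability of every theorem of~$L$.

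The hard part will be the completeness direction: organising the proof search so that it remains genuinely fair in the presence of the eigenvariable-introducing rules ($\Box$r) and the~$\rho_{i}$ (one must keep re-applying them as the stock of state variables grows, while ensuring that an infinite open branch still produces a frame satisfying \emph{all} the~$\alpha_{i}$), and threading the equality handling through the argument---verifying that passing to $\mathcal{E}^{\ast}$-equivalence classes is compatible both with the Truth Lemma and with the frame conditions, which is where (rep-l)--(rep-R2) and (ls-sc) earn their keep. Once the search procedure and the well-definedness of the quotient model are set up carefully, the remaining inductions are routine.
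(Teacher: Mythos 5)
Your proposal is workable, but it takes a genuinely different route from the one the paper relies on. The paper does not prove this theorem at all: it is imported from Negri~\cite{Neg05}, and the closest in-paper analogue of the argument (Theorem~\ref{thm_s_and_c}) establishes the completeness half \emph{syntactically} rather than via countermodels. There one derives, inside $\labK+\{\rho_{i}\}_{i\in I}$, the normal axiom of~K and the characteristic modal axiom of each frame condition (the GRS rule is applied exactly once in that derivation, to introduce the witnessing relation and equality terms that let the branches close), simulates \emph{modus ponens} by cut and necessitation as usual, and then invokes the cut-elimination theorem to discharge the cuts; soundness is handled as you do, by interpreting sequents over Kripke models and checking each rule, with the eigenvariable conditions licensing reassignment of the interpretation. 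Your completeness half instead goes through a fair proof search and a quotient countermodel. What the syntactic route buys is that it sidesteps exactly the machinery you flag as the hard part---fairness in the presence of the eigenvariable-introducing rules, extraction of an open branch from a possibly infinite search, and well-definedness of the $\mathcal{E}$-quotient---at the price of needing cut-elimination as an ingredient and of having to exhibit an explicit derivation of each axiom. What your route buys is a genuine countermodel from a failed search and independence from cut-elimination. Both are legitimate (Negri herself has used both), but if your argument is to stand in for the cited result you still owe the full treatment of the fair-search and quotient steps, which your write-up only sketches.
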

The GRS has a side condition that certain state variables do not appear appear in the conclusion. Such a condition is called a \textit{variable restriction} and the variable(s) are called~\textit{eigenvariable(s)}. Indeed, the rules (${\Box}$r) and (${\diam}$l) in~$\labK$ have each a single eigenvariable.

It is well-known that the basic normal modal logic~$K$ consists precisely of those formulae that are valid on all Kripke frames. By the Sahlqvist correspondence and completeness theorems, every Sahlqvist formula~$A_{i}$---i.e. a formula of a certain syntactic form---has a corresponding first-order formula~$\alpha_{i}$ such that the finite axiomatic extension~$K+\{A_{i}\}_{i\in I}$ consists precisely of those formulae that are valid on Krikpe frames satisfying the \textit{frame property}~$\land_{i\in I}\alpha_{i}$. For example, the first-order correspondents of the Sahlqvist formulae~$\Box p\imp p$ and~$\Box p\imp \Box\Box p$ are respectively $\forall x(Rxx)$ (reflexivity) and $\forall x\forall u\forall v(Rxu\land Ruv\imp Rxv)$ (transitivity). Then~$K4T$ i.e. $K+\{\Box p\imp p, \Box p\imp \Box\Box p\}$ is the logic of reflexive and transitive Kripke frames in the sense that $A\in K4T$ iff $A$ is valid on every reflexive and transitive Kripke frame. These results are well-known and we refer the reader to a standard reference on modal logic for the details (see e.g.~\cite{BlaRijVen01}).
Define:
\begin{align*}
&\RR^{0}xy:=\emptyset	&& \RR^{n+1}xy :=\{Rxy_{1},Ry_{1}y_{2},\ldots,Ry_{n-1}y_{n},Ry_{n}y\}	\\
&\RRc^{0}xy:=\top 		&& \RRc^{n+1}xy:=Rxy_{1}\land Ry_{1}y_{2}\land \ldots\land Ry_{n-1}y_{n}\land Ry_{n}y
\end{align*}
E.g. $\RR^{2}xy=\{Rxy_{1},Ry_{1}y\}$ and $\RRc^{2}=Rxy_{1}\land Ry_{1}y$.

Let $\bar{y}=y_{1},\ldots,y_{h-1},y$; $\bar{z}=z_{1},\ldots,z_{j-1},z$; $\bar{u}=u_{1},\ldots,u_{i-1},u$ and $\bar{v}=v_{1},\ldots,v_{k-1},v$
and let~$\lambda$ be the function that returns the last element of a non-empty sequence.
We write~$\forall\bar{y}$ as shorthand for $\forall y_{1}\ldots\forall y_{h-1}\forall y$;

The first-order correspondent~$f(h,i,j,k)$ of the Geach axiom~$G(h,i,j,k)$ is well-known:
\begin{equation}\label{Geach-fo}
\forall x\forall \bar{y}\forall \bar{z}\left(\RRc^{h}xy\land \RRc^{j}xz\longrightarrow \exists \bar{u}\bar{v}\left(\RRc^{i}\lambda(x\bar{y})u\land \RRc^{k}\lambda(x\bar{z})v\land \lambda(x\bar{y}\bar{u})=\lambda(x\bar{z}\bar{v})\right)\right)
\end{equation}
By inspection, every~$f(h,i,j,k)$ is a geometric formula.
The corresponding \textit{Geach (labelled sequent) structural rule} is given below. Note that the variables~$u_{1},\ldots,u_{i},u,v_{1},\ldots,v_{k},v$ do not appear in the conclusion.
\begin{equation}\label{GRS_Geach}
\text{
\AxiomC{$\mathcal{R},\RR^{h}xy,\RR^{j}xz,\RR^{i}\lambda(x\bar{y})u,\RR^{k}\lambda(x\bar{z})v,\lambda(x\bar{y}\bar{u})=\lambda(x\bar{z}\bar{v}),\mathcal{E}, X \DC Y$}
\RightLabel{$\L(h,i,j,k)$}
\UnaryInfC{$\mathcal{R},\RR^{h}xy,\RR^{j}xz,\mathcal{E}, X\DC Y$}
\DisplayProof
}
\end{equation}
Here are some examples of Geach formulae and their corresponding frame conditions:
\begin{align*}
&\text{(ref)}	&& \Box p\imp p 			&& \forall x(\top\rightarrow \exists u(Rxu\land u=x))\text{ so $i=1$ and $h,j,k=0$} \\
&\text{(trans)}	&& \Box p\imp\Box\Box p 	&& \forall x \forall z_{1}\forall z(Rxz_{1}\land Rz_{1}z\rightarrow \exists u(Rxu\land u=z))\text{ so $i=1$, $j=2$; $h,k=0$}\\
&\text{(confl)}	&& \diam\Box p\imp\Box\diam p && \forall x \forall y \forall z(Rxy\land Rxz\rightarrow \exists u \exists v(Ryu\land Rzv\land u=v)\text{ so $h,i,j,k=1$}
\end{align*}
From the Sahlqvist correspondence and completeness theorems (see~\cite{BlaRijVen01}) we have that the modal logic defined by the set~$\{f(h_{s},i_{s},j_{s},k_{s})\}_{s\in S}$
is precisely the modal logic $K+\{G(h_{s},i_{s},j_{s},k_{s})\}_{s\in S}$. 
\begin{exa}\label{eg-conf}
The first-order correspondent of~$G(1,1,1,1)=\diam\Box p\imp\Box\diam p$ is
\[
\forall xyz(Rxy\land Rxz\imp \exists uv(Ryu\land Rzv\land u=v))\quad\quad\quad\quad\quad
\]
Here is the corresponding Geach structural rule. The variables~$u$ and~$v$ do not appear in the conclusion:
\begin{align}\label{LTSE_Geach_rule}
\text{
\AxiomC{$\mathcal{R},Rxy,Rxz,Ryu,Rzv,\mathcal{E},u=v,X\DC Y$}
\RightLabel{$\L(1,1,1,1)$}
\UnaryInfC{$\mathcal{R},Rxy,Rxz,\mathcal{E},X\DC Y$}
\DisplayProof
}
\end{align}
Then~$\labK+\L(1,1,1,1)$ is a calculus for $K+\diam\Box p\imp\Box\diam p$.
\end{exa}

\noindent\textbf{Notation.} Let $\labKst$ denote some extension of~$\labK$ by Geach structural rules.

\subsection{Labelled tree sequents}

Labelled tree sequents~\cite{GorRam12AIML} are essentially LS whose relation mset defines a directed tree.
Labelled tree sequent calculi are isomorphic~\cite{GorRam12AIML} up to state variable names to a nested sequent (an isomorphism with prefixed tableaux has also been shown~\cite{Fit12}).
For example, the NS~(\ref{NSone}) and~(\ref{NStwo}) correspond to the labelled tree sequents\par
\begin{scriptsize}
\begin{align*}
&\underbrace{Rxy,Rxz,Ryu,Ryv}_{\text{relation mset}},x:X, y:P, z:S, u:U, v:L\DC x:Y, y:Q, z:T, u:V, v:M \\
&\overbrace{Rxy,Ryu,Ryv},x:X,y:U, u:L, v:S\DC x:Y,y:V, u:M, v:T
\end{align*}\par
\end{scriptsize}
The isomorphism will be more transparent to the reader if he/she consults the trees that we presented following (\ref{NSone}) and~(\ref{NStwo}). In particular, the relation msets above define those trees and the labelled formulae multisets specify how to decorate the tree.
This was used in~\cite{GorRam12AIML} to answer a question in~\cite{Pog09RSL} concerning the relationship between two distinct proof calculi.
We give a formal exposition below.
\begin{defi}[treelike]
A non-empty relation mset~$\mathcal{R}$ is \emph{treelike} if the directed graph defined by~$\mathcal{R}$ is a tree (i.e. it is rooted, irreflexive and its underlying undirected graph has no cycle).
\end{defi}
\begin{exa}
For each of the following relation msets, the graphs defined by these sets appear above it.
\begin{center}
\vspace{-0.75em}
\begin{small}
\begin{tabular}{c@{\hspace{2em}}c@{\hspace{2em}}c@{\hspace{2em}}c}
\begin{minipage}{3cm}
\xymatrix{
 x \ar@(ul,ur)
}
\end{minipage}
&
\begin{minipage}{3.5cm}
\xymatrix{
& y 	&	 v &\\
& x \ar[u]	&	 u \ar[u]&
}
\end{minipage}
&
\begin{minipage}{3.5cm}
\xymatrix{
		&  y 		&\\
 x \ar[ur] 	& 		&  z \ar[ul]
}
\end{minipage}
&
\begin{minipage}{3.5cm}
\xymatrix{
		&  u  		&		\\
 y \ar[ur] 	&		&  z \ar[ul]\\
		&  x \ar[ul]\ar[ur] &
}
\end{minipage}
\\
$\{ Rxx\}$ & $\{ Rxy,Ruv\}$ & $\{Rxy,Rzy\}$ & $\{Rxy,Rxz,Ryu,Rzu\}$
\end{tabular}
\end{small}
\medskip
\end{center}
None of the above relation msets are treelike because the graphs defined by their relation msets are not trees.
From left-to-right, graph~1 contains a reflexive state; graph~2 and graph~3 are not rooted. 
Frame~4 is not a tree because the underlying undirected graph contains a cycle.
\end{exa}

\begin{defi}[labelled tree sequent]\label{def_LTS}
A \emph{labelled tree-sequent} (LTS) is a labelled sequent of the form $\REL,X\DC Y$ where $\mathcal{E}=\emptyset$ and:
\begin{enumerate}
\item\label{LTS1} if $\mathcal{R}\neq\emptyset$ then $\mathcal{R}$~is treelike and every state variable~$x$
  that occurs in $X\cup Y$ occurs in~$\mathcal{R}$.

\item\label{LTS2} if $\mathcal{R}=\emptyset$ then every label in~$X$ and~$Y$ is the same.
\end{enumerate}
\end{defi}

\noindent Some examples of LTS: 
\begin{align*}
\quad x:A\DC x:B && \phantom{A}\DC y:A && Rxy,Rxz,x:A\DC y:B
\end{align*}
A state variable may occur in the relation mset and not in the $X,Y$ multisets
(e.g.~$z$ above far right). 
Below are \textit{not} $\LTS$ (assume no two in $x,y,z$ identical).\par
\begin{align*}
x:A\DC x:B,z:C && Rxy,x:A\DC z:B && Rxy,Ryz,Rxz\DC\phantom{A}
\end{align*}
From left-to-right above, the first labelled sequent is not an $\LTS$ because the relation mset is 
empty and yet two distinct state variables~$x$ and~$z$ occur in the sequent, violating condition Def.~\ref{def_LTS}(\ref{LTS2}). The next sequent violates Def.~\ref{def_LTS}(\ref{LTS1}) because~$z$
does not appear in the relation mset. The final sequent is not an~$\LTS$ because the relation mset is not treelike.

\subsection{Labelled tree sequents with equality}

\begin{defi}[labelled tree sequent with equality]\label{def_LTSE}
A \emph{labelled tree-sequent with equality} ($\LTSE$) is a labelled sequent of the form $\REL,X\DC Y$ where: 
\begin{enumerate}
\item\label{enum_def_LTSEi} if $\mathcal{R}\neq\emptyset$ then $\mathcal{R}$~is treelike and every state variable in $X,Y$ and~$\mathcal{E}$ occurs in~$\mathcal{R}$.

\item\label{enum_def_LTSEii} if $\mathcal{R}=\emptyset$ then every label in $X,Y$ and~$\mathcal{E}$ is the same.
\end{enumerate}
\end{defi}
\noindent Clearly every~$\LTS$ is an $\LTSE$. Each of the following is an~$\LTSE$:
\begin{align*}
Rxy,x=y,x:A\DC y:B && y=y\DC y:A && Rxy,Rxz,y=z,x:A\DC z:B
\end{align*}
The following are \textit{not} $\LTSE$ (assume that no two in $x,y$ and~$z$ are identical).
\begin{align*}
x=z,x:A\DC x:B && Rxy,y=z,x:A\DC y:B && Rxy,Ryz,Rxz,x=y,x=z\DC\phantom{A}
\end{align*}
From left-to-right above, the first labelled sequent is not an $\LTSE$ because the relation mset is 
empty and yet the sequent contains more than one label. The next sequent violates Def.~\ref{def_LTSE}(\ref{enum_def_LTSEii}) because~$z$ does not appear in the relation mset.
The final sequent is not an~$\LTSE$ because the relation mset is not treelike.

\subsection{Indexed nested sequents}


The indexed nested sequent calculus~$\INSK$~\cite{Fit15} for~$K$ is given in Fig.~\ref{fig_INSK}.
We write $\Gamma\ob X\DC^{a} Y\bl\context\cb$ to denote that the INS~$\Gamma$ contains the INS $X\DC^{a} Y\bl\context$. In the latter, $X$~is a multiset of formulae and~$Y$ is an INS.
Also $\Gamma\ob X\DC^{a} Y\bl\context\cb\ob U\DC^{b} V\bl\contexta\cb$ denotes an INS~$\Gamma$ containing two distinct INS: $X\DC^{a} Y\bl\context$ and $U\DC^{b} V\bl\contexta$.


Fitting's Geach scheme~\cite[Sec.~8]{Fit15} yields an INS rule~$\I(h,i,j,k)$ corresponding to $G(h,i,j,k)$ when $i+k>0$. 
Here is the INS rule~$\I(1,1,1,1)$, where index~$c$ does not appear in the conclusion.
\begin{equation}\label{INS_Geach_rule}
\text{
\AxiomC{$\Gamma\ob 	[X\DC^{a} Y\bl\context, [\DC^{c}]], [U\DC^{b} V\bl\contexta, [\DC^{c}]]	\cb$}
\RightLabel{$\I(1,1,1,1)$}
\UnaryInfC{$\Gamma\ob 	[X\DC^{a} Y\bl\context], [U\DC^{b} V\bl\contexta]	\cb$}
\DisplayProof
}
\end{equation}
Here is a derivation of $\DC G(1,1,1,1)$ i.e. $\DC^{0}\diam\Box p\imp\Box\diam p$ in $\INSK+\I(1,1,1,1)$.
\[
  \AxiomC{$\DC^{0}, [\DC^{1} \diam p, [\DC^{3} p]], [\Box p\DC^{2}, [p\DC^{3} p]]$}
  \RightLabel{(fc-r)}
  \UnaryInfC{$\DC^{0}, [\DC^{1} \diam p, [\DC^{3} p]], [\Box p\DC^{2}, [p\DC^{3}]]$}
  \RightLabel{($\Box$l)}
  \UnaryInfC{$\DC^{0}, [\DC^{1} \diam p, [\DC^{3} p]], [\Box p\DC^{2}, [\DC^{3}]]$}
  \RightLabel{($\diam$r)}
  \UnaryInfC{$\DC^{0}, [\DC^{1} \diam p, [\DC^{3}]], [\Box p\DC^{2}, [\DC^{3}]]$}
  \RightLabel{\I(1,1,1,1)}
  \UnaryInfC{$\DC^{0}, [\DC^{1} \diam p], [\Box p\DC^{2}]$}
  \RightLabel{($\diam$l)}
  \UnaryInfC{$\diam\Box p\DC^{0}, [\DC^{1} \diam p]$}
  \RightLabel{($\Box$r)}
  \UnaryInfC{$\diam\Box p\DC^{0} \Box\diam p$}
  \RightLabel{(${\imp}$r)}
  \UnaryInfC{$\DC^{0}\diam\Box p\imp\Box\diam p$}
  \DisplayProof
\]
In contrast to~\cite{Fit15} where INS rules are not given for Geach axioms of the form~$G(h,0,j,0)$, in this work we give INS calculi for \textit{all} Geach axioms using a rule scheme (see Sec.~\ref{translation}) that coincides with Fitting's scheme for $i+k>0$.


\begin{figure}
\begin{center}
\begin{small}
\begin{flalign*}
  &&
\AxiomC{}
\RightLabel{(init-$\bot$)}
\UnaryInfC{$\Gamma\ob \bot,X\DC^{a} Y\bl\context\cb$}
\DisplayProof
&&
\AxiomC{}
\RightLabel{(init)}
\UnaryInfC{$\Gamma\ob p,X\DC^{a} Y,p\bl\context\cb$}
\DisplayProof
&&
\AxiomC{$\Gamma\ob \Box A,X\DC^{a} Y\bl\context, [A, U\DC^{b} V\bl\contexta]\cb$}
\RightLabel{(${\Box}$l)}
\UnaryInfC{$\Gamma\ob \Box A,X\DC^{a} Y\bl\context, [U\DC^{b} V\bl\contexta]\cb$}
\DisplayProof
&&
\end{flalign*}
\smallskip
\begin{flalign*}
  &&
\AxiomC{$\Gamma\ob X\DC^{a} Y,A,B\bl\context\cb$}
\RightLabel{(${\lor}$r)}
\UnaryInfC{$\Gamma\ob X\DC^{a} Y,A\lor B\bl\context\cb$}
\DisplayProof
&&
\AxiomC{$\Gamma\ob X,A,B\DC^{a} Y\bl\context\cb$}
\RightLabel{(${\land}$l)}
\UnaryInfC{$\Gamma\ob X,A\land B\DC^{a} Y\bl\context\cb$}
\DisplayProof
&&
\AxiomC{$\Gamma\ob X\DC^{a} Y,A\bl\context\cb$}
\AxiomC{$\Gamma\ob X\DC^{a} Y,B\bl\context\cb$}
\RightLabel{(${\land}$r)}
\BinaryInfC{$\Gamma\ob X\DC^{a} Y,A\land B\bl\context\cb$}
\DisplayProof
&&
\end{flalign*}
\smallskip
\begin{flalign*}
  &&
\AxiomC{$\Gamma\ob A,X\DC^{a} Y\bl\context \cb \ob A,U\DC^{a} V\bl\contexta\cb$}
\RightLabel{(fc-l)}
\UnaryInfC{$\Gamma\ob A,X\DC^{a} Y\bl\context \cb \ob U\DC^{a} V\bl\contexta\cb$}
\DisplayProof
&&
\AxiomC{$\Gamma\ob A,X\DC^{a} Y,B\bl\context\cb$}
\RightLabel{(${\imp}$r)}
\UnaryInfC{$\Gamma\ob X\DC^{a} Y,A\imp B\bl\context\cb$}
\DisplayProof
&&
\AxiomC{$\Gamma\ob X\DC^{a} Y\bl\context, [\quad\DC^{b} A]\cb$}
\RightLabel{(${\Box}$r)}
\UnaryInfC{$\Gamma\ob X\DC^{a} Y\bl\context, \Box A\cb$}
\DisplayProof
&&
\end{flalign*}
\smallskip
\begin{flalign*}
  &&
\AxiomC{$\Gamma\ob A,X\DC^{a} Y\bl\context\cb$}
\AxiomC{$\Gamma\ob B,X\DC^{a} Y\bl\context\cb$}
\RightLabel{(${\lor}$l)}
\BinaryInfC{$\Gamma\ob A\lor B,X\DC^{a} Y\bl\context\cb$}
\DisplayProof
&&
\AxiomC{$\Gamma\ob X\DC^{a} Y,A\bl\context\cb$}
\AxiomC{$\Gamma\ob B,X\DC^{a} Y\bl\context\cb$}
\RightLabel{(${\imp}$l)}
\BinaryInfC{$\Gamma\ob A\imp B,X\DC^{a} Y\bl\context\cb$}
\DisplayProof
&&
\end{flalign*}
\smallskip
\begin{flalign*}
  &&
\AxiomC{$\Gamma\ob X\DC^{a} Y\bl\context, [A\DC^{b} \quad]\cb$}
\RightLabel{(${\diam}$l)}
\UnaryInfC{$\Gamma\ob \diam A,X\DC^{a} Y\bl\context\cb$}
\DisplayProof
&&
\AxiomC{$\Gamma\ob X\DC^{a} Y, \diam A\bl\context, [U\DC^{b} V, A\bl\contexta]\cb$}
\RightLabel{(${\diam}$r)}
\UnaryInfC{$\Gamma\ob X\DC^{a} Y, \diam A\bl\context, [U\DC^{b} V\bl\contexta]\cb$}
\DisplayProof
&&
\end{flalign*}
\smallskip
\begin{flalign*}
  &&
\AxiomC{$\Gamma\ob X\DC^{a} Y,A\bl\context\cb\ob U\DC^{a} V,A\bl\contexta\cb$}
\RightLabel{(fc-r)}
\UnaryInfC{$\Gamma\ob X\DC^{a} Y,A\bl\context\cb\ob U\DC^{a} V\bl\contexta\cb$}
\DisplayProof
&&
\AxiomC{$\Gamma\left\ob X\DC^{a} Y\bl\context, [P\DC^{b} Q\bl\contextb]\right\cb\left\ob U\DC^{a} V\bl\contexta, [\phantom{a}\DC^{b}\phantom{a}] \right\cb$}
\RightLabel{(sc)}
\UnaryInfC{$\Gamma\left\ob X\DC^{a} Y\bl\context, [P\DC^{b} Q\bl\contextb]\right\cb\left\ob U\DC^{a} V\bl\contexta\right\cb$}
\DisplayProof
&&
\end{flalign*}
\end{small}
\end{center}

\caption{The indexed nested sequent calculus~$\INSK$ for modal logic~$K$. The rules (${\Box}$r) and~(${\diam}$l) have the side condition that index~$b$ does not appear in the conclusion.}
\label{fig_INSK}
\end{figure}


\section{Syntactic cut-elimination for $\LTSE$-derivations}\label{sec_results}

An \textit{LTSE-derivation} is a restricted form of a labelled sequent derivation.
\begin{defi}[$\LTSE$-derivation in~$\labKst$]
An $\LTSE$-derivation in~$\labKst$ is a derivation in that calculus where every sequent is an~$\LTSE$.
\end{defi}
Let $\bar{x}=(x_{1},\ldots,x_{N+1})$ and $\bar{y}=(y_{1},\ldots,y_{N+1})$. Then $\mathcal{R}\sub{\bar{y}}{\bar{x}}$ is obtained by replacing every occurrence of~$x_{i}$ in relation mset~$\mathcal{R}$ with~$y_{i}$ ($1\leq i\leq N+1$). For an equality mset~$\mathcal{E}$ and a labelled formula multiset~$X$ define $\mathcal{E}\sub{\bar{y}}{\bar{x}}$ and $X\sub{\bar{y}}{\bar{x}}$ analogously.
When~$\bar{x}=(x)$ and~$\bar{y}=(y)$ we simply write~$\sub{y}{x}$. In words, $\sub{y}{x}$ is the substitution of every occurrence of~$x$ with~$y$.

We will now demonstrate how Negri's~\cite{Neg05} proof of cut-eliminability (also called cut-admissibility) in LS calculi can be reused to give a proof for the restricted class of LTSE-derivations. The proof in~\cite{Neg05} is not immediately a proof of cut-eliminability for LTSE-derivations it needs to be verified that every step of the cut-eliminability transformation stays within the class of LTSE derivations.

A rule is \emph{invertible} if the premises are derivable whenever the conclusion is derivable. A rule (not necessarily belonging to the calculus) is \emph{admissible} if the conclusion is derivable whenever the premises are derivable. The transformation is \textit{height-preserving} if the height of the final derivation is no greater than the height of the original derivation. Let us recall the key lemmata in~\cite{Neg05}. 
\begin{description}
\item[identity derivation] The sequent $\REL,x:A,X \DC x:A, Y$ for arbitrary formula~$A$ is derivable in~$\labKst$~\cite[Lem. 4.1]{Neg05}.

\item[hp substitution] If $\REL, X \DC Y $  is derivable in~$\labKst$, then there is a derivation of $\REL[\sub{y}{x}]\sub{y}{x}, X \sub{y}{x}\DC Y \sub{y}{x}$ of no greater height~\cite[Lem. 4.3]{Neg05}. 

\item[hp weakening] If $\REL, X \DC Y $ is derivable in~$\labKst$, then there is a derivation of $\mathcal{R},\mathcal{R}',\mathcal{E},\mathcal{E}', X ,X'\DC Y,Y'$ of no greater height~\cite[Lem. 4.4]{Neg05}.

\item[hp invertibility] All the rules of~$\labKst$ are height-preserving invertible~\cite[Lem. 4.11]{Neg05}.

\item[hp contraction] The rules of contraction are height-preserving admissible in~$\labKst$~\cite[Lem. 4.12]{Neg05}.
\end{description}
The substitution property above can transform an LTSE into a non-LTSE (since it operates directly on the relation mset) and hence we require a more nuanced substitution lemma (Lem.~\ref{lem-weaksubs}).
\begin{quote}
Suppose that sequent~$s$ has a LTSE-derivation. Suppose that~$\alpha$ is in the relation mset of~$s$, and that $s\sub{\beta}{\alpha}$ is a LTSE. Then $s\sub{\beta}{\alpha}$ has a LTSE-derivation of the same height.
\end{quote}
The main addition to the proofs of the lemmata above is ensuring that if the given derivation is an LTSE-derivation, then so is the transformed derivation.

\begin{lem}[identity derivation]
The sequent $\REL,x:A, X \DC x:A, Y $ for arbitrary formula~$A$ is LTSE-derivable in~$\labKst$.
\end{lem}
\begin{proof}
Proceed by induction on the size of~$A$.
\end{proof}
This following is called a renaming lemma because the substituting variable is required to be fresh.
\begin{lem}[renaming]\label{lem-renaming}
If $\REL, X \DC Y $ does not contain~$\alpha$ and is LTSE-derivable, then there is a LTSE-derivation of $\REL[\sub{\beta}{\alpha}]\sub{\beta}{\alpha}, X \sub{\beta}{\alpha}\DC Y \sub{\beta}{\alpha}$ of no greater height.
\end{lem}
\begin{proof}
Proceed by induction on the height of the LTSE-derivation~$d$ of $s=\REL, X \DC Y $. Consider the last rule~$r$ in~$d$. If~$\alpha$ is an eigenvariable in the premise of~$r$,  then it cannot occur in~$s$, so~$s\sub{\beta}{\alpha}$ and~$s$ are identical. Otherwise, apply the induction hypothesis to every premise of~$r$ and then apply~$r$ to the desired LTSE-derivation.
\end{proof}

\begin{lem}[LTSE-substitution]\label{lem-weaksubs}
If sequent~$s$ has an LTSE-derivation, $\alpha$ is in the relation mset of~$s$ and $s\sub{\beta}{\alpha}$ is a LTSE, then $s\sub{\beta}{\alpha}$ has a LTSE-derivation of the same height.
\end{lem}
\begin{proof}
Let~$d$ be a derivation of~$s$. Proceed by induction on the height of~$s$.

In the base case, $d$ has height~$1$ and is hence an initial sequent $\mathcal{R},\mathcal{E},x:\bot, X \DC Y $ or $\mathcal{R},\mathcal{E},x:p, X \DC Y ,x:p$. By hypothesis, $\mathcal{R}\sub{\beta}{\alpha}$ defines a tree. The following are then the respective LTSE-derivations.
\begin{align*}
&\mathcal{R}\sub{\beta}{\alpha},\mathcal{E}\sub{\beta}{\alpha},(x:\bot)\sub{\beta}{\alpha}, X \sub{\beta}{\alpha}\DC Y \sub{\beta}{\alpha} \\
&\mathcal{R}\sub{\beta}{\alpha},\mathcal{E}\sub{\beta}{\alpha},(x:p)\sub{\beta}{\alpha}, X \sub{\beta}{\alpha}\DC Y \sub{\beta}{\alpha},(x:p)\sub{\beta}{\alpha}
\end{align*}

For the inductive case, suppose that~$d$ has height~$>1$. Consider the last rule~$r$ in~$d$.

Suppose that~$r$ is a logical rule (the rule introduces a new logical---propositional---connective in the conclusion), or~(sym) or~(trans). Then the rule instance can be written in the following form (we illustrate with a unary rule, a binary rule is analogous).
\[
  \AxiomC{$\mathcal{R},\mathcal{E}, X_{1}\DC Y_{1}$}
  \UnaryInfC{$\mathcal{R},\mathcal{E}, X_{0}\DC Y_{0}$}
  \DisplayProof
\]
If~$r$ has an eigenvariable~$v$ in the premise, first apply Lem.~\ref{lem-renaming} to rename the eigenvariable~$v$ to a variable~$v^{*}$ such that $v^{*}$~does not occur in the conclusion \textit{and} $v^{*}\not\in\{\alpha,\beta\}$.
By assumption, the premise is LTSE-derivable, $\alpha$ occurs in its relation mset and~$\mathcal{R}\sub{\beta}{\alpha}$ defines a tree.
We can thus apply the IH to obtain an LTSE-derivation of
\[
(\mathcal{R}\sub{v^{*}}{v})\sub{\beta}{\alpha},(\mathcal{E}\sub{v^{*}}{v})\sub{\beta}{\alpha},( X_{1}\sub{v^{*}}{v})\sub{\beta}{\alpha}\DC( Y_{1}\sub{v^{*}}{v})\sub{\beta}{\alpha}
\]
Here and in the remainder of \textit{this proof}, for brevity we will append a prime~$\gamma'$ to every variable/term~$\gamma$ as shorthand to mean~$\gamma\sub{\beta}{\alpha}$ (rather than writing~$\sub{\beta}{\alpha}$ explicitly). Thus the above sequent will be denoted
\[
(\mathcal{R}\sub{v^{*}}{v})',(\mathcal{E}\sub{v^{*}}{v})',( X_{1}\sub{v^{*}}{v})'\DC( Y_{1}\sub{v^{*}}{v})'
\]
Now apply~$r$ to obtain
\[
\mathcal{R}',\mathcal{E}', X_{1}'\DC  Y_{1}'
\]
This is the desired LTSE-derivation.

Suppose that the last rule is (rep-l) (the case of (rep-r) is similar).
\[
  \AxiomC{$\REL,x=y,x:A,y:A, X \DC Y $}
  \RightLabel{(rep-l)}
  \UnaryInfC{$\REL,x=y,x:A, X \DC Y $}
  \DisplayProof
\]
By assumption we know that the premise is a LTSE-derivation, $\alpha$~occurs in the relation mset of~$\mathcal{R}$ and~$\mathcal{R}\sub{\beta}{\alpha}$ defines a tree. 
Therefore we can apply the IH to the premise to obtain an LTSE-derivation of the following (recall that `priming' the variables is shorthand to indicate that the substitution has been applied)
\[
\mathcal{R}',\mathcal{E}',x'=y',x':A,y':A, X'\DC Y'
\]
Apply (rep-l) to obtain a LTSE-derivation of $\mathcal{R}',\mathcal{E}',x'=y',x':A,y':A, X'\DC Y'$ as desired.

Suppose that the last rule is (ls-sc); $v$ does not appear in the conclusion:
\[
  \AxiomC{$\REL[,Rxy,Ruv],x=u,y=v, X \DC Y $}
  \RightLabel{(ls-sc)}
  \UnaryInfC{$\REL[,Rxy],x=u, X \DC Y $}
  \DisplayProof
\]
First, apply Lem.~\ref{lem-renaming} to the premise to rename the eigenvariable~$v$ to a variable~$v^{*}$ such that $v^{*}$~does not occur in the conclusion \textit{and} $v^{*}\not\in\{\alpha,\beta\}$. We then obtain
\begin{equation}\label{weaksubs-lssc}
\REL[,Rxy,Ruv^{*}],x=u,y=v^{*}, X \DC Y 
\end{equation}
Since $\mathcal{R},Rxy$ contains~$u$ and because~$\mathcal{R}\sub{\beta}{\alpha},(Rxy)\sub{\beta}{\alpha}$ defines a tree by assumption, it follows that~$u\sub{\beta}{\alpha}$ is a node in this tree.
Appending the prime symbol to denote~$\sub{\beta}{\alpha}$, we write the latter relation set as $\mathcal{R}',Rx'y'$. Since~$v^{*}$ does not occur in this relation mset, and since adding an edge from a node in the tree to a new node preserves the tree property, it follows that~$\mathcal{R}',Rx'y',Ru'v^{*}$ defines a tree as well. Thus~(\ref{weaksubs-lssc}) is LTSE-derivable, its relation mset contains~$\alpha$, and it remains a LTSE under~$\sub{\beta}{\alpha}$. By IH we obtain a height-preserving derivation of the following.
\[
\mathcal{R}',Rx'y',Ru'v^{*},\mathcal{E}',x'=u',y'=v^{*}, X'\DC Y'
\]
Now apply (ls-sc) to obtain $\mathcal{R}',Rx'y',\mathcal{E}',x'=u', X'\DC Y'$. This was the desired LTSE-derivation.

Suppose that the last rule is a Geach structural rule. To minimise the notational clutter, let us demonstrate with the concrete rule in~(\ref{LTSE_Geach_rule}) (the general case is straightforward).
\[
  \AxiomC{$\mathcal{R},Rxy,Rxz,Ryu,Rzv,\mathcal{E},u=v, X \DC Y $}
  \RightLabel{$\L(1,1,1,1)$}
  \UnaryInfC{$\mathcal{R},Rxy,Rxz,\mathcal{E}, X \DC Y $}
  \DisplayProof
\]
Let~$u^{*},v^{*}$ be distinct variables that do not occur in the conclusion \textit{and} $u^{*},v^{*}\not\in\{\alpha,\beta\}$. Proceed as follows.
\begin{align*}
  \AxiomC{$\mathcal{R},Rxy,Rxz,Ryu,Rzv,\mathcal{E},u=v, X \DC Y $}
  \RightLabel{hp renaming}
  \dashedLine
  \UnaryInfC{$\mathcal{R},Rxy,Rxz,Ryu^{*},Rzv^{*},\mathcal{E},u^{*}=v^{*}, X \DC Y $}
  \dashedLine
  \RightLabel{IH}
  \UnaryInfC{$\mathcal{R}',Rx'y',Rx'z',Ry'u',Rz'v^{*},\mathcal{E},u^{*}=v^{*}, X'\DC Y'$}
  \RightLabel{$\L(1,1,1,1)$}
  \UnaryInfC{$\mathcal{R}',Rx'y',Rx'z', X'\DC Y'$}
  \DisplayProof\\[-\normalbaselineskip]\tag*{\qedhere}
\end{align*}
\end{proof}

A relation mset~$\mathcal{R}_{1}$ is a \textit{tree-extension} of~$\mathcal{R}_{0}$ if (i)~the graph~$\g(\mathcal{R}_{1})$ is a tree and (ii)~the graph~$\g(\mathcal{R}_{0})$ is a subtree of~$\g(\mathcal{R}_{1})$. Additionally, if the root of~$\g(\mathcal{R}_{0})$ and the root of~$\g(\mathcal{R}_{1})$ are identical, then~$\mathcal{R}_{1}$ is called a \textit{special tree-extension} of~$\mathcal{R}_{0}$. Those vertices and edges not in the subtree are called the \textit{new vertices and edges}.
\begin{lem}\label{lem-te}
Let~$\mathcal{R}_{1}$ be a tree-extension of~$\mathcal{R}_{0}$, and let~$\mathcal{R}^{*}$ be a special tree-extension of~$\mathcal{R}_{0}$ such that no new vertex of~$\g(\mathcal{R}^{*})$ occurs in~$\g(\mathcal{R}_{1})$.
Then $\mathcal{R}_{1},\mathcal{R}^{*}$ is a tree-extension of~$\mathcal{R}_{0}$.
\end{lem}
\begin{proof}
Let~$t_{0}$ denote~$\g(\mathcal{R}_{1})$. There must be a sequence $(u_{1},v_{1}),\ldots,(u_{N},v_{N})$ of new edges---witnessing that~$\g(\mathcal{R}^{*})$ is a special tree-extension of~$\g(\mathcal{R}_{0})$---added successively to~$t_{0}$ (let~$t_{j}$ be the connected directed graph obtained after that~$j^{\text{th}}$ addition) such that for every~$j$ ($1\leq j\leq N$):
\begin{enumerate}
\item\label{te-one} $t_{N}$ is $\g(\mathcal{R}_{1},\mathcal{R}^{*})$

\item\label{te-two} $u_{j}\in t_{j-1}$ and~$v_{j}\not\in t_{j-1}$. By connectedness of~$t_{j}$, one of~$u_{j}$ and~$v_{j}$ must occur in~$t_{j-1}$. Suppose~$v_{j}\in t_{j-1}$: if~$v_{j}\in \g(\mathcal{R}_{0})$ then this would violate the ``special" property, and else if~$v_{j}\in (\g(\mathcal{R}_{1})\cap \g(\mathcal{R}_0)^{c})$ this would violate the ``no new vertex of~$\g(\mathcal{R}^{*})$ (wrt $\g(\mathcal{R}_{0})$) occurs in~$\g(\mathcal{R}_{1})$" assumption.
 
\item\label{te-three} Since~$u_{j}\in t_{j-1}$ and $v_{j}\not\in t_{j-1}$ by~(\ref{te-two}), if~$t_{j-1}$ is a tree containing~$\mathcal{R}_{0}$ as a subgraph, then this holds for~$t_{j}$ too.
\end{enumerate}
Since~$t_{0}$ is a tree containing~$\g(\mathcal{R}_{0})$ as a subtree (because~$\mathcal{R}_{1}$ is a tree extension of~$\mathcal{R}_{0}$) from~(\ref{te-three}) it follows that~$t_{N}$ is a tree containing~$\g(\mathcal{R}_{0})$ as a subtree. Then by~(\ref{te-one}), $\mathcal{R}_{1},\mathcal{R}^{*}$ is a tree-extension of~$\mathcal{R}_{0}$.
\end{proof}

\begin{lem}\label{lem-rules-ste}
In every rule instance of~($\Box$r), (ls-sc) and Geach structural rule such that the conclusion is an LTSE, the premise relation mset(s) are special tree-extensions of the conclusion relation mset.
\end{lem}
\begin{proof}
By assumption, the conclusion relation mset defines a tree. The result then follows from the observation that the relation msets extend the conclusion mset by terms~$Rxy$ such that~$y$ is always an eigenvariable.
\end{proof}

\begin{lem}[height-preserving weakening]\label{lem_wk}
Suppose that~$\mathcal{R}_{0}, X_{0}\DC Y_{0}$ is LTSE-derivable. If $\mathcal{R}_{0},\mathcal{R}',\mathcal{E}_{0},\mathcal{E}', X_{0}, X'\DC Y_{0}, Y'$ is an LTSE, then it has a LTSE-derivation of no greater height.
\end{lem}
\begin{proof}
Let~$d$ be a LTSE-derivation of~$\mathcal{R}_{0}, X_{0}\DC Y_{0}$. Proceed by induction on the height of~$d$.

In the base case, $\mathcal{R}_{0}, X_{0}\DC Y_{0}$ is an initial sequent. This means that $\mathcal{R}_{0},\mathcal{R}',\mathcal{E}_{0},\mathcal{E}', X_{0}, X'\DC Y_{0}, Y'$ is an initial sequent as well.

For the inductive case, suppose that~$d$ has height~$>1$. Consider the last rule~$r$ in~$d$. For every rule that does not contain eigenvariables, the relation and equality msets of the conclusion and the premise(s) are identical. We can abstract the form of such a rule as follows (we illustrate with a unary rule, an analogous argument applies to rules with more premises).
\[
  \AxiomC{$\mathcal{R}_{0},\mathcal{E}_{0}, X_{1}\DC Y_{1}$}
  \UnaryInfC{$\mathcal{R}_{0},\mathcal{E}_{0}, X_{0}\DC Y_{0}$}
  \DisplayProof
\]
That $\mathcal{R}_{0},\mathcal{R}',\mathcal{E}_{0},\mathcal{E}', X_{1}, X'\DC Y_{1}, Y'$ is an LTSE follows from the assumption. Hence by IH the sequent has an LTSE-derivation. Now apply~$r$ to obtain an LTSE-derivation of $\mathcal{R}_{0},\mathcal{R}',\mathcal{E}_{0},\mathcal{E}', X_{0}, X'\DC Y_{0}, Y'$.

If~$r$ does contain eigenvariables---the rules are($\Box$r), (ls-sc) and the Geach structural rules, then we can abstract the form of the rule as follows (we illustrate with a unary rule, an analogous argument applies to rules with more premises).
\[
  \AxiomC{$\mathcal{R}_{1},\mathcal{E}_{1}, X_{1}\DC Y_{1}$}
  \UnaryInfC{$\mathcal{R}_{0},\mathcal{E}_{0}, X_{0}\DC Y_{0}$}
  \DisplayProof
\]
Let~$\bar{e}$ be the set of eigenvariables of~$r$. Apply the hp renaming (Lem.~\ref{lem-renaming}) in the premise to rename the variables~$\bar{e}$ to fresh variables $\bar{e}^{*}$ appearing in neither the premise nor conclusion. 
In this way from the premise we obtain $\mathcal{R}_{1}\sub{\bar{e}^{*}}{\bar{e}},\mathcal{E}_{1}\sub{\bar{e}^{*}}{\bar{e}}, X_{1}\sub{\bar{e}^{*}}{\bar{e}}\DC Y_{1}\sub{\bar{e}^{*}}{\bar{e}}$. By Lem.~\ref{lem-rules-ste}, $\mathcal{R}_{1}\sub{\bar{e}^{*}}{\bar{e}}$ is a special tree extension of~$\mathcal{R}_{0}$ whose new vertices are not in the weakening term~$\mathcal{R}'$. Moreover~$\mathcal{R}'$ is a tree-extension of~$\mathcal{R}_{0}$. From Lem.~\ref{lem-te}, $\mathcal{R}',\mathcal{R}_{1}\sub{\bar{e}^{*}}{\bar{e}}$ is a tree-extension of~$\mathcal{R}_{0}$.
Thus $\mathcal{R}_{1}\sub{\bar{e}^{*}}{\bar{e}},\mathcal{R}',\mathcal{E}_{1}\sub{\bar{e}^{*}}{\bar{e}},\mathcal{E}',X_{1}\sub{\bar{e}^{*}}{\bar{e}},X'\DC Y_{1}\sub{\bar{e}^{*}}{\bar{e}},Y'$ is an LTSE. Apply IH to obtain its LTSE-derivation. Now apply~$r$ to obtain an LTSE-derivation of $\mathcal{R}_{0},\mathcal{R}',\mathcal{E}_{0},\mathcal{E}',X_{0},X'\DC Y_{0},Y'$.
\end{proof}

\begin{lem}[height-preserving invertibility]\label{lem_inv}
If the conclusion of a logical rule in~$\labKst$ has an LTSE-derivation, then its premise has an LTSE-derivation of no greater height.
\end{lem}
\begin{proof}
(sketch) Let~$d$ be an LTSE-derivation of the conclusion of a rule instance in~$\labKst$. Proceed by induction on the height of~$d$.
Consider the last rule~$r$ in~$d$. In every structural rule, every term in the conclusion occurs in the premise. For this reason, the result follows from height-preserving weakening. For the logical rules, apply the IH to the premise of~$r$ and then apply~$r$.
\end{proof}

There are four rules of contraction.
\medskip
\begin{center}
\begin{tabular}{c@{\hspace{2cm}}c}
\AxiomC{$\REL[,Rxy,Rxy], X\DC Y $}
\UnaryInfC{$\REL[,Rxy], X\DC Y $}
\DisplayProof
&
\AxiomC{$\REL,x=y,x=y, X\DC Y $}
\UnaryInfC{$\REL,x=y, X\DC Y $}
\DisplayProof
\\[1.5em]
\AxiomC{$\REL,x:A,x:A, X\DC Y $}
\UnaryInfC{$\REL,x:A, X\DC Y $}
\DisplayProof
&
\AxiomC{$\REL, X\DC Y ,x:A,x:A$}
\UnaryInfC{$\REL,x:A, X\DC Y ,x:A$}
\DisplayProof
\end{tabular}
\end{center}
\medskip
We prove their height-preserving admissibility simultaneously.
\begin{lem}[height-preserving contraction admissibility]\label{lem_ctr}
If the premises of any of the above contraction rules is LTSE-derivable, then there is an LTSE-derivation of the conclusion of no greater height.
\end{lem}
\begin{proof}
Let~$d$ be a LTSE-derivation of a premise of a contraction rule. By induction on the height of~$d$. 
Consider the last rule~$r$ in~$d$. If the contraction is on a relation term or an equality term, then we can use the IH to apply the contraction on the premises and then apply~$r$. 
The remaining case is the admissibility of the formula contraction rules. Suppose that the conclusion of~$d$ contains~$x:A,x:A$. If~$A$ is not principal by~$r$, then once again we can proceed by applying the IH to the premise(s) and then apply~$r$. When one occurrence of~$A$ \textit{is} principal by~$r$, then the idea is to first use height-preserving invertibility (Lem.~\ref{lem_inv}) on the other occurrence of~$A$ followed by the IH and then apply~$r$. When~$r$ is~(${\Box}$r)---the rule ($\diam$l) is similar)---care must be taken with the eigenvariables as shown below.

Suppose that~$d$ concludes as below left. Then the required LTSE-derivation is below right ($y$ and~$z$ are eigenvariables).
\begin{align*}
\begin{tabular}{cc}
\AxiomC{$\REL[,Rxy], X\DC Y ,y:A, x:\Box A$}
\RightLabel{(${\Box}$r)}
\UnaryInfC{$\REL, X\DC Y ,x:\Box A,x:\Box A$}
\DisplayProof
&
\AxiomC{$\REL[,Rxy], X\DC Y ,y:A,x:\Box A$}
\dashedLine
\RightLabel{hp inv}
\UnaryInfC{$\REL[,Rxy,Rxz], X\DC Y ,y:A,z: A$}
\dashedLine
\RightLabel{hp LTSE-subs.}
\UnaryInfC{$\REL[,Rxz], X\DC Y ,z:A,z: A$}
\dashedLine
\RightLabel{IH}
\UnaryInfC{$\REL[,Rxz], X\DC Y ,z:A$}
\RightLabel{(${\Box}$r)}
\UnaryInfC{$\REL, X\DC Y ,x:\Box A$}
\DisplayProof
\end{tabular}\\[-\normalbaselineskip]\tag*{\qedhere}
\end{align*}
\end{proof}
The usual cut rule is given below. 
\[
  \AxiomC{$\mathcal{R}_{1},\mathcal{E}_{1}, X\DC  Y ,x:A$}
  \AxiomC{$\mathcal{R}_{2},\mathcal{E}_{2},x:A,U\DC V$}
  \BinaryInfC{$\mathcal{R}_{1},\mathcal{R}_{2},\mathcal{E}_{1},\mathcal{E}_{2}, X,U\DC  Y,V$}
  \DisplayProof
\]
Notice that the relation mset of the conclusion may not be treelike even if the premise relation msets have this property. For this reason we use the \textit{additive} version of the cut rule which has the property that the conclusion is an~$\LTSE$ whenever the premises are~$\LTSE$.
\[
  \AxiomC{$\mathcal{R},\mathcal{E}_{1}, X\DC  Y ,x:A$}
  \AxiomC{$\mathcal{R},\mathcal{E}_{2},x:A,U\DC V$}
  \RightLabel{cut}
  \BinaryInfC{$\mathcal{R},\mathcal{E}_{1},\mathcal{E}_{2}, X,U\DC  Y ,V$}
  \DisplayProof
\]
In the above rule, we refer to~$A$ as the \textit{cut-formula}.

The cut-eliminability proof follows that for the labelled calculus~\cite{Neg05}.
\begin{thm}\label{thm_cut_elim}
The cut rule is eliminable in~$\labKst$: LTSE-derivations of the premises of the cut rule can be transformed into an LTSE derivation of the conclusion.
\end{thm}
\begin{proof}
By primary induction on the size of the cut-formula and secondary induction on the sum of the heights of the derivations of the premises.

First suppose that the cut-formula is \textit{not} principal in the left premise (the argument is analogous if the cut-formula is not principal in the right premise). When the last rule~$r$ in the left premise is a unary rule, the cut has the following form (an analogous argument applies to rules with more premises).
\[
  \AxiomC{$\mathcal{R},\mathcal{R}',\mathcal{E}_{1},\mathcal{E}', X_{1}'\DC Y_{1}',x:B$}
  \RightLabel{$r$}
  \UnaryInfC{$\mathcal{R},\mathcal{E}_{1}, X_{1}\DC Y_{1},x:B$}
  \AxiomC{$\mathcal{R},\mathcal{E}_{2},x:B, X_{2}\DC Y_{2}$}
  \RightLabel{cut}
  \BinaryInfC{$\mathcal{R},\mathcal{E}_{1},\mathcal{E}_{2}, X_{1}, X_{2}\DC Y_{1}, Y_{2}$}
  \DisplayProof
\]
If~$r$ has a side condition, since the eigenvariable(s)~$\bar{e}$ do not appear in its conclusion, it follows that no variable in~$\bar{e}$ appears in~$\mathcal{R}$ and hence not in the right premise of cut either (recall by Def.~\ref{def_LTSE}(\ref{enum_def_LTSEi}) that every variable in the LTSE must appear in the relation mset). Proceed as follows.
\[
  \AxiomC{$\mathcal{R},\mathcal{R}',\mathcal{E}_{1},\mathcal{E}', X_{1}'\DC Y_{1}',x:B$}
  \AxiomC{$\mathcal{R},\mathcal{E}_{2},x:B, X_{2}\DC Y_{2}$}
  \RightLabel{hp weakening}
  \dashedLine
  \UnaryInfC{$\mathcal{R},\mathcal{R}',\mathcal{E}_{2},x:B, X_{2}\DC Y_{2}$}
  \RightLabel{cut}
  \BinaryInfC{$\mathcal{R},\mathcal{R}',\mathcal{E}_{1},\mathcal{E}',\mathcal{E}_{2}, X_{1}', X_{2}\DC Y_{1}', Y_{2}$}
  \RightLabel{$r$}
  \UnaryInfC{$\mathcal{R},\mathcal{E}_{1},\mathcal{E}_{2}, X_{1}, X_{2}\DC Y_{1}, Y_{2}$}
  \DisplayProof
\]
Since this cut has identical cut-formula to the original cut and lesser cut-height, it is eliminable by the induction hypothesis.

The remaining case to consider is when the cut-formula is principal in both left and right premise. When the logical rule has no eigenvariables, the transformations are standard. The remaining cases are~($\Box$r) and~($\diam$l). We illustrate the former below (the transformation for the latter is similar).
\[
  \AxiomC{$\mathcal{R},Rxy,Rxz,\mathcal{E}_{1}, X_{1}\DC Y_{1},y:A$}
  \RightLabel{(${\Box}$r)}
  \UnaryInfC{$\mathcal{R},Rxz,\mathcal{E}_{1}, X_{1}\DC Y_{1},x:\Box A$}
  \AxiomC{$\mathcal{R},Rxz,\mathcal{E}_{2},x:\Box A,z:A, X_{2}\DC Y_{2}$}
  \RightLabel{(${\Box}$l)}
  \UnaryInfC{$\mathcal{R},Rxz,\mathcal{E}_{2},x:\Box A, X_{2}\DC Y_{2}$}
  \RightLabel{cut}
  \BinaryInfC{$\mathcal{R},Rxz,\mathcal{E}_{1},\mathcal{E}_{2}, X_{1}, X_{2}\DC Y_{1}, Y_{2}$}
  \DisplayProof
\]
Then proceed:
\begin{small}
  \[
    \AxiomC{$\mathcal{R},Rxy,Rxz,\mathcal{E}_{1}, X_{1}\DC Y_{1},y:A$}
    \RightLabel{LTSE-subs.}
    \dashedLine
    \UnaryInfC{$\mathcal{R},Rxz,Rxz,\mathcal{E}_{1}, X_{1}\DC Y_{1},z:A$}
    \RightLabel{ctr}
    \dashedLine
    \UnaryInfC{$\mathcal{R},Rxz,\mathcal{E}_{1}, X_{1}\DC Y_{1},z:A$}
    \AxiomC{$\mathcal{R},Rxz,\mathcal{E}_{1}, X_{1}\DC Y_{1},x:\Box A$}
    \noLine
    \LeftLabel{cut$\Big\{$}
    \UnaryInfC{$\mathcal{R},Rxz,\mathcal{E}_{2},x:\Box A,z:A, X_{2}\DC Y_{2}$}
    \UnaryInfC{$\mathcal{R},Rxz,\mathcal{E}_{1},\mathcal{E}_{2},z:A, X_{1}, X_{2}\DC Y_{1}, Y_{2}$}
    \RightLabel{cut}
    \BinaryInfC{$\mathcal{R},Rxz,\mathcal{E}_{1},\mathcal{E}_{1},\mathcal{E}_{2}, X_{1}, X_{1}, X_{2}\DC Y_{1}, Y_{1}, Y_{2}$}
    \dashedLine
    \RightLabel{ctr}
    \UnaryInfC{$\mathcal{R},Rxz,\mathcal{E}_{1},\mathcal{E}_{2}, X_{1}, X_{2}\DC Y_{1}, Y_{2}$}
    \DisplayProof
  \]
\end{small}
The upper cut has identical cut-formula to the original cut and lesser cut-height so it is eliminable by the induction hypothesis. 
Use the cutfree derivation as the right premise of the lower cut in the above diagram. This cut has smaller cut-formula so it is eliminable by the induction hypothesis. In this way we obtain a cutfree derivation.
\end{proof}

\section{Geach logics: soundness and completeness for LTSE-derivations}\label{sec_Geach_derivations}

We are ready to show that LTSE-derivations in~$\labKst$ are sound and complete for the corresponding Geach logic.
This means that $\DC x:A$ is derivable~$\labKst$ iff $A$~is a theorem of the corresponding Geach logic.

First of all, here is the LTSE derivation of~$G(h,i,j,k)$ using $\L(h,i,j,k)$:
\begin{small}
  \[
    \AxiomC{$\RR^{h}xy,\RR^{j}xz,\RR^{i}yu,\RR^{k}zu,u=u',y:\Box^{i} p,\ldots,u':p, u:p\DC z:\diam^{k} p,\ldots,u':p$}
    \RightLabel{(rep-l) on $u=u'$}
    \UnaryInfC{$\RR^{h}xy,\RR^{j}xz,\RR^{i}yu,\RR^{k}zu,u=u',y:\Box^{i} p,\ldots,u:p\DC z:\diam^{k} p,\ldots,u':p$}
    \RightLabel{($\diam$r) $k$ times}
    \UnaryInfC{$\RR^{h}xy,\RR^{j}xz,\RR^{i}yu,\RR^{k}zu,u=u',y:\Box^{i} p,\ldots,u:p\DC z:\diam^{k} p$}
    \RightLabel{($\Box$l) $i$ times}
    \UnaryInfC{$\RR^{h}xy,\RR^{j}xz,\RR^{i}yu,\RR^{k}zu',u=u',y:\Box^{i} p\DC z:\diam^{k} p$}
    \RightLabel{$\L(h,i,j,k)$}
    \UnaryInfC{$\RR^{h}xy,\RR^{j}xz,y:\Box^{i} p\DC z:\diam^{k} p$}
    \RightLabel{($\Box$r)  $j$ times}
    \UnaryInfC{$\RR^{h}xy, y:\Box^{i} p\DC x:\Box^{j}\diam^{k} p$}
    \RightLabel{(${\diam}$l) $h$ times}
    \UnaryInfC{$x:\diam^{h}\Box^{i} p \DC x:\Box^{j}\diam^{k} p$}
    \RightLabel{(${\imp}$r)}
    \UnaryInfC{$\DC x:\diam^{h}\Box^{i} p\imp \Box^{j}\diam^{k} p$}
    \DisplayProof
  \]
\end{small}

\begin{thm}\label{thm_s_and_c}
Let~$S\subset\mathbb{N}^{4}$ be finite.
$A\in K+\{G(h,i,j,k) | (h,i,j,k)\in S\}$ iff $\DC x:A$ is LTSE-derivable in $\labK+\{\L(h,i,j,k) | (h,i,j,k)\in S\}$.
\end{thm}
\begin{proof}
($\Leftarrow$) is soundness. In the spirit of reusing as much as possible, let us exploit the LS (without equality!) calculus (call it~$\mathcal{C}$) for $K+\{G(h,i,j,k) | (h,i,j,k)\in S\}$ presented in~\cite{Neg05}.  Interpret the LTSE $s=\mathcal{R},\mathcal{E},X\DC Y$ as the LS (without equality) $\mathcal{I}(s)=\mathcal{R}[\mathcal{E}], X[\mathcal{E}]\DC Y[\mathcal{E}]$. Here $(\cdot)[\mathcal{E}]$ denotes the substitution of every $x_{j}$ with~$x_{i}$ where~$i$ is the least number such that $\mathcal{E}\models x_{i}=x_{j}$ ($\models$ is the usual consequence relation for the theory of equality). By inspection, for every rule (below left) in $\labK+\{\L(h,i,j,k) | (h,i,j,k)\in S\}$---taking care first to rename the eigenvariables so that they have higher variable index than non-eigenvariables---the rule below right is admissible in~$\mathcal{C}$.
\begin{center}
\begin{tabular}{c@{\hspace{3cm}}c}
\AxiomC{$s_{1}$}
\AxiomC{$\ldots$}
\AxiomC{$s_{N}$}
\TrinaryInfC{$s_{0}$}
\DisplayProof
&
\AxiomC{$\mathcal{I}(s_{1})$}
\AxiomC{$\ldots$}
\AxiomC{$\mathcal{I}(s_{N})$}
\TrinaryInfC{$\mathcal{I}(s_{0})$}
\DisplayProof
\end{tabular}
\end{center}
It follows then that if $\DC x:A$ is LTSE-derivable in $\labK+\{\L(h,i,j,k) | (h,i,j,k)\in S\}$, then $\DC x:A$ is derivable in~$\mathcal{C}$. By soundness of the latter, $A\in K+\{G(h,i,j,k) | (h,i,j,k)\in S\}$.

($\Rightarrow$) is completeness.
We saw above that~$G(h,i,j,k) | (h,i,j,k)\in S\}$ is $\LTSE$-derivable.
The LTSE-derivation of the normality axiom is given in~(\ref{lab-der-normal-axiom}).
The necessitation rule $\DC x:A / \DC x:\Box A$ is clearly LTSE-derivable. Simulate \textit{modus ponens} using the cut rule.
So from a Hilbert calculus proof of $A\in K+\{G(h,i,j,k) | (h,i,j,k)\in S\}$ we can obtain a LTSE-derivation of $\DC x:A$ in $\labK+\{\L(h,i,j,k) | (h,i,j,k)\in S\} + cut$. The result follows from the cut-elimination theorem (Thm.~\ref{thm_cut_elim}).
\end{proof}

\section{Maps between~$\labKst$ and~$\INSKst$ calculi for Geach logics}\label{sec_relationship}

Throughout this section, let~$S\subset\mathbb{N}^{4}$ be finite.
We now establish a correspondence between LTSE derivations in~$\labK+\{\L(h,i,j,k) | (h,i,j,k)\in S\}$ and $\INSK+\{\I(h,i,j,k) | (h,i,j,k)\in S\}$.
This result extends the correspondence between NS and LTS derivations in~\cite{GorRam12AIML}.
We have not yet introduced the INS Geach structural rule~$\I(h,i,j,k)$ yet (this is done at the end of this section), since the rule can be obtained directly from~$\L(h,i,j,k)$ using the maps below.

\subsection{From INS to LTSE}

\begin{description}
\item[Input]~INS~$s$

\item[Step 1]~Represent~$s$ as its underlying tree~$\tau_{s}$ whose nodes are decorated by indexed nested sequents. 

\item[Step 2]~Name the nodes of~$\tau_{s}$ as~$x_{1},\ldots,x_{N}$ and let $X_{k}\DC^{a_{k}}Y_{k}$ be the decoration at~$x_{k}$.
Let~$S$ denote the set $\cup_{i=1}^{N} \{a_{i}\}$ of all indices in~$s$. Describe the structure of the tree~$\tau_{s}$ as a relation mset~$\mathcal{R}$.

\item[Step 3]~For each~$a\in S$, define~$\mathcal{E}_{a}=\left\{ x_{i}=x_{j} \,|\, \text{$i\neq j$ and $a_{i}=a_{j}$} \right\}$ and set~$\mathcal{E}=\cup_{a\in S}\mathcal{E}_{a}$.

\item[Output the LTSE]
\[
\mathcal{R},\mathcal{E},\cup_{i=1}^{N}\{x_{i}:B| B\in X_{i}\} \DC \cup_{i=1}^{N}\{x_{i}:B| B\in Y_{i}\}
\]
\end{description}

\begin{exa}
Suppose that the input INS~$s$ is
\[
X\DC^{0} Y, [P\DC^{1} Q, [U\DC^{0} V], [L\DC^{1} M]], [S\DC^{2} T]
\]
Step~1 represents~$s$ as the tree~$\tau_{s}$ (below left)
\begin{align*}
\text{
\xymatrix{
U\DC^{0} V & L\DC^{1} M \\
P\DC^{1} Q \ar[u]\ar[ur] & S\DC^{2} T \\
X\DC^{0} Y\ar[u]\ar[ur] & 
}
}
&
\quad
&
\text{
\xymatrix{
\fbox{$x_{5}$}\,U\DC^{0} V & L\DC^{1} M \, \fbox{$x_{4}$} \\
\fbox{$x_{3}$}\,P\DC^{1} Q \ar[u]\ar[ur] & S\DC^{2} T \, \fbox{$x_{2}$}\\
\fbox{$x_{1}$}\, X\DC^{0} Y\ar[u]\ar[ur] & 
}
}
\end{align*}
In Step~2 we label each node of~$\tau_{s}$ with variables $x_{1},\ldots,x_{5}$ as indicated above right. The set~$S$ of indices is $\{0,1,2\}$. The relation mset describing~$\tau_{s}$ is $\mathcal{R}=\{Rx_{1}x_{2},Rx_{1}x_{3},Rx_{3}x_{4},Rx_{3}x_{5}\}$.

For Step~3, observe that in the tree above right, $x_{1}$ and~$x_{5}$ both have the index~$0$. Meanwhile $x_{3}$ and~$x_{4}$ both have the index~$1$. Only~$x_{2}$ has the index~$2$. Thus $\mathcal{E}_{0}=\{x_{1}=x_{5}\}$, 
$\mathcal{E}_{1}=\{x_{3}=x_{4}\}$ and $\mathcal{E}_{2}=\emptyset$. Taking their union, $\mathcal{E}=\{x_{1}=x_{5}, x_{3}=x_{4}\}$.

For the output, first read off $L=\{x_{1}:X, x_{2}:S, x_{3}:P, x_{4}:L, x_{5}:U\}$ and $M=\{x_{1}:Y, x_{2}:T, x_{3}:Q, x_{4}:M, x_{5}:V\}$ from the tree above right. Output the LTSE
\[
Rx_{1}x_{2},Rx_{1}x_{3},Rx_{3}x_{4},Rx_{3}x_{5}, x_{1}=x_{5}, x_{3}=x_{4}, L\DC M
\]
\end{exa}

\subsection{From LTSE to INS} 

\begin{description}
\item[Input] LTSE~$\REL,X\DC Y$

\item[Step 1] Construct the tree~$\tau_{s}$ defined by~$\mathcal{R}$.

\item[Step 2] Decorate each node~$u$ in~$\tau_{s}$ with the traditional sequent $X_{u}\DC Y_{u}$. Here $X_{u}$ ($Y_{u}$) denotes the subset of~$X$ (resp.~$Y$) of formulae with label~$u$.

\item[Step 3] It remains to assign an index to the traditional sequents. Assign the index so that two traditional sequents have the same index iff their corresponding nodes $u,v\in\mathbb{S}$ in~$\tau_{s}$ have the property that $\mathcal{E}\models u=v$.

\item[Output] Represent the decorated tree as an INS.
\end{description}

\begin{exa}
Set $X_{0}=\{x_{1}:X, x_{2}:S, x_{3}:P, x_{4}:L, x_{5}:U\}$ and $Y_{0}=\{x_{1}:Y, x_{2}:T, x_{3}:Q, x_{4}:M, x_{5}:V\}$.
Suppose that we are given the LTSE~$s$:
\[
Rx_{1}x_{2},Rx_{1}x_{3},Rx_{3}x_{4},Rx_{3}x_{5}, x_{1}=x_{5}, x_{3}=x_{4}, X_{0}\DC Y_{0}
\]
From the relation mset we obtain the tree~$\tau_{s}$ below left.
\begin{align*}
\text{
\xymatrix{
\fbox{$x_{5}$} & \fbox{$x_{4}$} \\
\fbox{$x_{3}$} \ar[u]\ar[ur] & \fbox{$x_{2}$}\\
\fbox{$x_{1}$} \ar[u]\ar[ur] & 
}
}
&
\quad
&
\text{
\xymatrix{
\fbox{$x_{5}$}\,U\DC V & L\DC^{1} M \, \fbox{$x_{4}$} \\
\fbox{$x_{3}$}\,P\DC Q \ar[u]\ar[ur] & S\DC T \, \fbox{$x_{2}$}\\
\fbox{$x_{1}$}\, X\DC Y\ar[u]\ar[ur] & 
}
}
\end{align*}
In Step~2, we decorate this tree with the labelled formulae to obtain the decorated tree above right.

The equality mset of~$s$ is $\{x_{1}=x_{5}, x_{3}=x_{4}\}$. Following Step~3, we will assign the same index to~$x_{1}$ and~$x_{5}$, and also the same index to~$x_{3}$ and~$x_{4}$. The remaining nodes are assigned further distinct indices. In this way we obtain the decorated tree below.
\begin{center}
\text{
\xymatrix{
\fbox{$x_{5}$}\,U\DC^{0} V & L\DC^{1} M \, \fbox{$x_{4}$} \\
\fbox{$x_{3}$}\,P\DC^{1} Q \ar[u]\ar[ur] & S\DC^{2} T \, \fbox{$x_{2}$}\\
\fbox{$x_{1}$}\, X\DC^{0} Y\ar[u]\ar[ur] & 
}
}
\end{center}
From this tree we read off the corresponding INS
\[
X\DC^{0} Y, [P\DC^{1} Q, [U\DC^{0} V], [L\DC^{1} M]], [S\DC^{2} T]
\]
\end{exa}

\subsection{Between LTSE-derivations in~$\labKst$ and derivations in~$\INSKst$.}\label{translation}

Suppose that~$r$ is one of the following rules in~$\labK$:
\[
r\in \{\text{init}, \text{init}{-}\bot, {\lor}l, {\lor }r, {\land}l, {\land}r, {\imp}l,{\imp}r, \Box l, \Box r\}
\]
By inspection, for every rule instance, applying the LTSE-to-INS map above to the premise(s) and conclusion yields a rule instance of~($r$) in~$\INSK$.

Similarly, suppose that~$r$ is one of the following rules in~$\INSK$:
\[
r\in \{\text{init}, \text{init}{-}\bot, {\lor}l, {\lor }r, {\land}l, {\land}r, {\imp}l,{\imp}r, \Box l, \Box r\}
\]
By inspection, for every rule instance, applying the INS-to-LTSE map above to the premise(s) and conclusion yields a rule instance of~($r$) in~$\labK$. 

Similarly, the rule (fc-l) corresponds to the rule (rep-l), and the rule (fc-r) corresponds to the rule (rep-r).
Also, the (sc) rule corresponds to the rule (ls-sc). 

Finally, for a LTSE Geach structural rule ($\L(h,i,j,k)$)
\[
  \AxiomC{$\mathcal{R},\RR^{h}xy,\RR^{j}xz,\RR^{i}\lambda(x\bar{y})u,\RR^{k}\lambda(x\bar{z})v,\lambda(x\bar{y}\bar{u})=\lambda(x\bar{z}\bar{v}),\mathcal{E},U\DC V$}
  \RightLabel{$\L(h,i,j,k)$}
  \UnaryInfC{$\mathcal{R},\RR^{h}xy,\RR^{j}xz,\mathcal{E},U\DC V$}
  \DisplayProof
\]
\noindent define the following \textit{INS Geach structural rule} $\I(h,i,j,k)$) using the LTSE-to-INS map. 

\textbf{First suppose that $h,j>0$.} Then~$\I(h,i,j,k)$ has conclusion
\begin{multline*}
\Gamma[X_{1} \DC^{a_{1}} X_{1}', [X_{2}\DC^{a_{2}} X_{2}',[\ldots[X_{h-1}\DC^{a_{h-1}} X_{h-1}', [X_{h}\DC^{a_{h}} X_{h}]]\ldots]], \\
[Y_{1} \DC^{b_{1}} Y_{1}', [Y_{2}\DC^{b_{2}} Y_{2}',[\ldots[Y_{n-1}\DC^{b_{j-1}} Y_{j-1}', [Y_{j}\DC^{b_{j}} Y_{j}']]\ldots]]] ]
\end{multline*}
\begin{itemize}
\item If~$i,j>0$ then~$\I(h,i,j,k)$ has the following premise where the $c_{1},\ldots,d_{1}\ldots, c$ do not appear in the conclusion.
\begin{multline*}
\Gamma[ [X_{1} \DC^{a_{1}} X_{1}', [X_{2}\DC^{a_{2}} X_{2}',[\ldots[X_{h-1}\DC^{a_{h-1}} X_{h-1}', [X_{h}\DC^{a_{h}} X_{h}, \\
	[\, \DC^{c_{1}} \,, [\, \DC^{c_{2}} \,,[\ldots[\, \DC^{c_{i-1}} \,, [\, \DC^{c} \,]]\ldots]]] ]]\ldots]]] , \\
[Y_{1} \DC^{b_{1}} Y_{1}', [Y_{2}\DC^{b_{2}} Y_{2}',[\ldots[Y_{n-1}\DC^{b_{j-1}} Y_{j-1}', [Y_{j}\DC^{b_{j}} Y_{j}', \\
	[\, \DC^{d_{1}} \,, [\, \DC^{d_{2}} \,,[\ldots[\, \DC^{d_{k-1}} \,, [\, \DC^{c} \,]]\ldots]]] ]]\ldots]]] ]
\end{multline*}
\item If~$i=0$ and~$j>0$ then~$\I(h,i,j,k)$ has the following premise where the $d_{1}\ldots, c$ do not appear in the conclusion.
\begin{multline*}
\Gamma[ [X_{1} \DC^{a_{1}} X_{1}', [X_{2}\DC^{a_{2}} X_{2}',[\ldots[X_{h-1}\DC^{a_{h-1}} X_{h-1}', [X_{h}\DC^{c} X_{h}, \\
[Y_{1} \DC^{b_{1}} Y_{1}', [Y_{2}\DC^{b_{2}} Y_{2}',[\ldots[Y_{n-1}\DC^{b_{j-1}} Y_{j-1}', [Y_{j}\DC^{b_{j}} Y_{j}', \\
	[\, \DC^{d_{1}} \,, [\, \DC^{d_{2}} \,,[\ldots[\, \DC^{d_{k-1}} \,, [\, \DC^{c} \,]]\ldots]]] ]]\ldots]]] ]
\end{multline*}
\item If~$i,j=0$ then~$\I(h,i,j,k)$ has the following premise (there are no eigenvariables)
\begin{multline*}
\Gamma[ [X_{1} \DC^{a_{1}} X_{1}', [X_{2}\DC^{a_{2}} X_{2}',[\ldots[X_{h-1}\DC^{a_{h-1}} X_{h-1}', [X_{h}\DC^{a_{h}} X_{h}]]\ldots]]], \\
[Y_{1} \DC^{b_{1}} Y_{1}', [Y_{2}\DC^{b_{2}} Y_{2}',[\ldots[Y_{n-1}\DC^{b_{j-1}} Y_{j-1}', [Y_{j}\DC^{a_{h}} Y_{j}']]\ldots]]] ]\\
\end{multline*}
\end{itemize}

\textbf{Next, suppose that~$h=0$ and~$j>0$.} Then~$\I(h,i,j,k)$ has conclusion
\[
\Gamma[U\DC^{n} V, [Y_{1} \DC^{b_{1}} Y_{1}', [Y_{2}\DC^{b_{2}} Y_{2}',[\ldots[Y_{n-1}\DC^{b_{j-1}} Y_{j-1}', [Y_{j}\DC^{b_{j}} Y_{j}']]\ldots]]] ]
\]
\begin{itemize}
\item If~$i,j>0$ then~$\I(h,i,j,k)$ has the following premise where the $c_{1},\ldots,d_{1}\ldots, c$ do not appear in the conclusion.
\begin{multline*}
\Gamma[ U\DC^{n} V, [\, \DC^{c_{1}} \,, [\, \DC^{c_{2}} \,,[\ldots[\, \DC^{c_{i-1}} \,, [\, \DC^{c} \,]]\ldots]]] , \\
[Y_{1} \DC^{b_{1}} Y_{1}', [Y_{2}\DC^{b_{2}} Y_{2}',[\ldots[Y_{n-1}\DC^{b_{j-1}} Y_{j-1}', [Y_{j}\DC^{b_{j}} Y_{j}', \\
	[\, \DC^{d_{1}} \,, [\, \DC^{d_{2}} \,,[\ldots[\, \DC^{d_{k-1}} \,, [\, \DC^{c} \,]]\ldots]]] ]]\ldots]]] ]
\end{multline*}
\item If~$i=0$ and~$j>0$ then~$\I(h,i,j,k)$ has the following premise where the $d_{1}\ldots, c$ do not appear in the conclusion.
\begin{multline*}
\Gamma[ U\DC^{c} V, [Y_{1} \DC^{b_{1}} Y_{1}', [Y_{2}\DC^{b_{2}} Y_{2}',[\ldots[Y_{n-1}\DC^{b_{j-1}} Y_{j-1}', [Y_{j}\DC^{b_{j}} Y_{j}', \\
	[\, \DC^{d_{1}} \,, [\, \DC^{d_{2}} \,,[\ldots[\, \DC^{d_{k-1}} \,, [\, \DC^{c} \,]]\ldots]]] ]]\ldots]]] ]
\end{multline*}
\item If~$i,j=0$ then~$\I(h,i,j,k)$ has the following premise (there are no eigenvariables)
\[
\Gamma[ U\DC^{n} V, [Y_{1} \DC^{b_{1}} Y_{1}', [Y_{2}\DC^{b_{2}} Y_{2}',[\ldots[Y_{n-1}\DC^{b_{j-1}} Y_{j-1}', [Y_{j}\DC^{n} Y_{j}']]\ldots]]] ]
\]
\end{itemize}
The remaining cases are analogous. Of course, if $h,i,j,k=0$ this corresponds to~$p\imp p$ so no structural rule is required.

From Thm.~\ref{thm_s_and_c} and the above translations we get:
\begin{cor}\label{cor-INSK}
Let~$S\subset\mathbb{N}^{4}$ be finite. Then $\INSK+\{\I(h,i,j,k) | (h,i,j,k)\in S\}$ is a INS calculus for $K+\{G(h,i,j,k) | (h,i,j,k)\in S\}$. Moreover, there is a 1:1 correspondence between a derivation of $\DC A$ in this calculus and a LTSE-derivation of $\DC x:A$ in $\labK+\{\L(h,i,j,k) | (h,i,j,k)\in S\}$.
\end{cor}
This Corollary covers \textit{all} the Geach logics, including the axioms $\{\diam^{h}p\imp \Box^{j}p|h,j\geq 0\}$ which are omitted in~\cite{Fit15}.
\begin{exa}
Here are the INS and LTSE rules for $\diam p\imp\Box p$.
\medskip
\begin{center}
\begin{tabular}{c@{\hspace{1cm}}c}
\AxiomC{$\Gamma\left[[X\DC^{a} Y\bl\context],[U\DC^{a} V\bl\contexta]\right]$}
\RightLabel{$\I(1,0,1,0)$}
\UnaryInfC{$\Gamma\left[[X\DC^{a} Y\bl\context],[U\DC^{b} V\bl\contexta]\right]$}
\DisplayProof
&
\AxiomC{$\REL[,Rxy,Rxz],y=z,X\DC Y$}
\RightLabel{$\L(1,0,1,0)$}
\UnaryInfC{$\REL[,Rxy,Rxz],X\DC Y$}
\DisplayProof
\end{tabular}
\end{center}
\medskip
The INS derivation of the axiom and the corresponding LTSE derivation are given below left and right respectively.
\medskip
\begin{center}
\begin{tabular}{cc}
\AxiomC{$\, \DC^{0} \, , [p \DC^{1} \,], [p \DC^{1} p]$}
\RightLabel{(fc-l)}
\UnaryInfC{$\, \DC^{0} \, , [p \DC^{1}\,], [\, \DC^{1} p]$}
\RightLabel{$\I(1,0,1,0)$}
\UnaryInfC{$\, \DC^{0} \, , [p \DC^{1}\,], [\, \DC^{2} p]$}
\RightLabel{($\Box$l)}
\UnaryInfC{$\, \DC^{0} \Box p, [p \DC^{1}\,]$}
\RightLabel{($\diam$l)}
\UnaryInfC{$\diam p\DC^{0} \Box p$}
\RightLabel{(${\imp}$r)}
\UnaryInfC{$\DC^{0} \diam p\imp \Box p$}
\DisplayProof
&
\AxiomC{$Rxy, Rxz, y=z, y:p, z:p \DC z: p$}
\RightLabel{(rep-l)}
\UnaryInfC{$Rxy, Rxz, y=z, y:p \DC z: p$}
\RightLabel{$\L(1,0,1,0)$}
\UnaryInfC{$Rxy, Rxz, y:p \DC z: p$}
\RightLabel{($\Box$l)}
\UnaryInfC{$Rxy, y:p \DC x:\Box p$}
\RightLabel{($\diam$l)}
\UnaryInfC{$x:\diam p\DC x:\Box p$}
\RightLabel{(${\imp}$r)}
\UnaryInfC{$\DC x:\diam p\imp \Box p$}
\DisplayProof
\end{tabular}
\end{center}
\medskip
\end{exa}

\begin{exa}\label{exa_INS_from_Geach}
Consider~$G(2,2,1,0)$. Then the LTSE-rule~$\L(2,2,1,0)$ is given below, where $u,u_{1}$ do not appear in the conclusion.
\[
  \AxiomC{$\mathcal{R},Rxy_{1},Ry_{1}y,Rxz,Ryu_{1},Ru_{1}u,u=z,\mathcal{E}, X\DC Y$}
  \RightLabel{$\L(2,2,1,0)$}
  \UnaryInfC{$\mathcal{R},Rxy_{1},Ry_{1}y,Rxz,\mathcal{E}, X\DC Y$}
  \DisplayProof
\]
The corresponding INS rule $\I(2,2,1,0)$ is
\[
  \AxiomC{$\Gamma\left[ [X_{1}\DC^{a_{1}} Y_{1}, [X_{2}\DC^{a_{2}} Y_{2}, [\, \DC^{c_{1}} \, ,[\, \DC^{c} \,]]]], [L_{1}\DC^{c} M_{1}]\right]$}
  \RightLabel{$\I(2,2,1,0)$}
  \UnaryInfC{$\Gamma\left[ [X_{1}\DC^{a_{1}} Y_{1}, [X_{2}\DC^{a_{2}} Y_{2}]], [L_{1}\DC^{b_{1}} M_{1}]\right]$}
  \DisplayProof
\]
\end{exa}

\section{Intermediate logics}

%
Logics between classical and intuitionistic logic are called \textit{intermediate logics}.
While nested sequent calculi have been presented for intuitionistic logic~Ip~\cite{Fit14} and recently also for extensions of intuitionistic logic with modalities~\cite{Str13FOSSACS,MarStr14AIML}, we are not aware of any nested sequent calculi for intermediate logics. 
In contrast, a LS calculus~$\labI$~\cite{DycNeg12} (Fig.~\ref{fig_labI}) for propositional intuitionistic logic was extended via structural rules to capture those intermediate logics whose Kripke semantics are defined by geometric formulae. 

\begin{figure}

\begin{small}
\begin{tabular}{cc}
\AxiomC{}
\RightLabel{(init-$\bot$)}
\UnaryInfC{$\REL,x:\bot,X\DC  Y $}
\DisplayProof
&
\AxiomC{}
\UnaryInfC{$\REL[,Rxy],x:p, X\DC  Y, y:p$}
\DisplayProof
\\[1em]
\AxiomC{$\REL[,Rxv,v=x], X\DC Y$}
\RightLabel{(iref)}
\UnaryInfC{$\REL, X\DC Y$}
\DisplayProof
&
\AxiomC{$\REL[,Rxy,Ryz,Rxv,v=z], X\DC Y$}
\RightLabel{(itrans)}
\UnaryInfC{$\REL[,Rxy,Ryz], X\DC Y$}
\DisplayProof
\\[1em]
 \AxiomC{$\REL,x:A,x:B, X\DC  Y$}
 \RightLabel{${\land}$l}
 \UnaryInfC{$\REL,x:A\land B, X\DC  Y$}
 \DisplayProof
 &
 \AxiomC{$\REL, X\DC  Y,x:A$}
 \AxiomC{$\REL, X\DC  Y,x:B$}
 \RightLabel{${\land}$r}
 \BinaryInfC{$\REL, X\DC  Y,x:A\land B$}
 \DisplayProof
\\[1em]
 \AxiomC{$\REL,x:A, X\DC Y$}
 \AxiomC{$\REL,x:B, X\DC Y$}
 \RightLabel{(${\lor}$l)}
 \BinaryInfC{$\REL,x:A\lor B, X\DC Y$}
 \DisplayProof
 &
 \AxiomC{$\REL, X\DC Y,x:A,x:B$}
 \RightLabel{(${\lor}$r)}
 \UnaryInfC{$\REL, X\DC Y,x:A\lor B$}
 \DisplayProof
\\[1.5em]
\multicolumn{2}{c}{
\AxiomC{$\REL[,Rxy], x:A\imp B, X\DC y:A, Y$}
\AxiomC{$\REL[,Rxy], x:A\imp B,y:B, X\DC  Y$}
\RightLabel{(${\imp}$l)}
\BinaryInfC{$\REL[,Rxy], x:A\imp B,y:B, X\DC  Y$}
\DisplayProof
}
\\[1.5em]
\AxiomC{$\REL[,Rxv],v:A,  X\DC Y,v:B$}
\RightLabel{(${\imp}$r)}
\UnaryInfC{$\REL, X\DC Y,x:A\imp B$}
\DisplayProof
&
\AxiomC{$\REL[,Rxy,Ruv],x=u,y=v,X\DC Y $}
\RightLabel{(ls-sc)}
\UnaryInfC{$\REL[,Rxy],x=u,X\DC Y $}
\DisplayProof
\\[1em]
\AxiomC{$\REL,x=y,x:A,y:A,X\DC Y $}
\RightLabel{(rep-l)}
\UnaryInfC{$\REL,x=y,x:A,X\DC Y $}
\DisplayProof
&
\AxiomC{$\REL,x=y,X\DC Y ,x:A,y:A$}
\RightLabel{(rep-r)}
\UnaryInfC{$\REL,x=y,X\DC Y ,x:A$}
\DisplayProof
\\[1em]
\AxiomC{$\REL, x=x,X\DC Y $}
\RightLabel{(eq-ref)}
\UnaryInfC{$\REL,X\DC Y $}
\DisplayProof
&
\AxiomC{$\REL, x=z, x=y, y=z, X\DC Y $}
\RightLabel{(eq-trans)}
\UnaryInfC{$\REL, x=y, y=z,X\DC Y $}
\DisplayProof
\end{tabular}
\end{small}

\caption{The labelled sequent calculus~$\labI$. In (${\imp}$r), (iref), (itrans) and (ls-sc), $v$ does not appear in the conclusion.}
\label{fig_labI}
\end{figure}

In this section we show how to use~$\labI$ to obtain an INS calculi for intermediate logics whose frame semantics are defined by first-order formulae of 
the form~(\ref{Geach-fo}). From the corresponding~$(h,i,j,k)$ we obtain labelled sequent rule~$\L(h,i,j,k)$ and INS rule~$\I(h,i,j,k)$.
The INS calculus~$\INSI$ (Fig.~\ref{fig_INSI}) is obtained from the rules of~$\labI$ by translating each~$\LTSE$ into an~INS using the LTSE-to-INS map from the previous section.
By a similar argument to that in Sec.~\ref{sec_results} we can show that the cut rule is eliminable from~$\labI$ while preserving $\LTSE$-derivations.
In this way we can obtain a 1:1 correspondence between LTSE-derivations in~$\labI+\{ \L(h,i,j,k) | (h,i,j,k)\in S \}$ and derivations in $\INSI + \{ \I(h,i,j,k) | (h,i,j,k) \in S \}$ for finite $S\subset\mathbb{N}^{4}$. 

The well-known relationship (see e.g.~\cite{ChaZak97}) between intermediate logics and their frame semantics yields, for example, that $Ip+(p\imp\bot)\lor ((p\imp\bot)\imp\bot)$ is the logic of frames satisfying
\[
\forall xyz(Rxy\land Rxz\imp \exists uv(Ryu\land Rzv\land u=v))\quad\quad\quad\quad\quad
\]
We saw in Example~\ref{eg-conf} that the corresponding Geach structural rule is~(\ref{LTSE_Geach_rule}). Hence the INS calculus for this logic is~$\INSI+\I(1,1,1,1)$ (see~(\ref{INS_Geach_rule}).
Here is the LTSE-derivation of the axiom.
\[
  \AxiomC{$Rxy,Rxz,Ryu,Rzv,y=v,y:p,{v:p},z:p\imp\bot\DC y:\bot, z:\bot, {v:p}$}
  \RightLabel{(rep-l)}
  \UnaryInfC{$Rxy,Rxz,Ryu,Rzv,y=v,y:p,z:p\imp\bot\DC y:\bot, z:\bot, v:p$}
  \RightLabel{\vspace{1cm}\hspace{-1.5cm}(${\imp}$l)$\Big\{$}
  \noLine
  \UnaryInfC{\vspace{0.1cm}}
  \noLine
  \UnaryInfC{$Rxy, Rxz,Ryu,Rzv,u=v, y:p, z:(p\imp\bot), v:\bot \DC x:\bot, z:\bot$}
  \UnaryInfC{$Rxy, Rxz,Ryu,Rzv,u=v, y:p, z:(p\imp\bot) \DC x:\bot, z:\bot$}
  \RightLabel{$\L(1,1,1,1)$}
  \UnaryInfC{$Rxy, Rxz, y:p, z:(p\imp\bot) \DC x:\bot, z:\bot$}
  \RightLabel{(${\imp}$r)}
  \UnaryInfC{$Rxy, y:p \DC y:\bot, ((p\imp\bot)\imp\bot)$}
  \RightLabel{(${\imp}$r)}
  \UnaryInfC{$\DC x:(p\imp\bot), ((p\imp\bot)\imp\bot)$}
  \RightLabel{(${\lor}$r)}
  \UnaryInfC{$\DC x:(p\imp\bot)\lor ((p\imp\bot)\imp\bot)$}
  \DisplayProof
\]
We leave it to the reader to construct the corresponding INS derivation.

\begin{figure}[t]
\begin{center}
\begin{small}
\begin{tabular}{cc}
\AxiomC{}
\RightLabel{(init)}
\UnaryInfC{$\Gamma\ob p,X\DC^{a} Y,p\bl\context\cb$}
\DisplayProof
&
\AxiomC{}
\UnaryInfC{$\Gamma\ob p,X\DC^{a} Y, [U\DC^{b} V,p\bl\context]\cb$}
\DisplayProof
\\[1em]
\AxiomC{$\Gamma\ob X\DC^{a} Y,[\,\DC^{a}]\bl\context\cb$}
\RightLabel{(iref)}
\UnaryInfC{$\Gamma\ob X\DC^{a} Y\bl\context\cb$}
\DisplayProof
&
\AxiomC{$\Gamma\ob X\DC^{a} Y\bl\context,[U\DC^{b} V\bl\contexta,[P\DC^{s} Q\bl\contextb]], [\,\DC^{s}\,]\cb$}
\RightLabel{(itrans)}
\UnaryInfC{$\Gamma\ob X\DC^{a} Y\bl\context,[U\DC^{b} V\bl\contexta,[P\DC^{s} Q\bl\contextb]],\cb$}
\DisplayProof
\\[1.5em]
\multicolumn{2}{c}{
\AxiomC{$\Gamma\ob A\imp B,X\DC^{a} Y\bl\context, [U\DC^{b} V,A\bl\contexta]\cb$}
\AxiomC{$\Gamma\ob A\imp B,X\DC^{a} Y\bl\context, [B,U\DC^{b} V\bl\contexta]\cb$}
\RightLabel{(${\imp}$l)}
\BinaryInfC{$\Gamma\ob A\imp B,X\DC^{a} Y\bl\context, [U\DC^{b} V\bl\contexta]\cb$}
\DisplayProof
}
\\[1.5em]
\AxiomC{$\Gamma\ob X\DC^{a} Y,[A\DC^{b} B]\bl\context\cb$}
\RightLabel{(${\imp}$r)}
\UnaryInfC{$\Gamma\ob X\DC^{a} Y,A\imp B\bl\context\cb$}
\DisplayProof
&
\AxiomC{$\Gamma\left\ob X\DC^{a} Y\bl\context, [P\DC^{b} Q\bl\contextb]\right\cb\left\ob U\DC^{a} V\bl\contexta, [\phantom{a}\DC^{b}\phantom{a}] \right\cb$}
\RightLabel{(sc)}
\UnaryInfC{$\Gamma\left\ob X\DC^{a} Y\bl\context, [P\DC^{b} Q\bl\contextb]\right\cb\left\ob U\DC^{a} V\bl\contexta\right\cb$}
\DisplayProof
\\[1em]
 \AxiomC{$\Gamma\ob A,B,X\DC^{a} Y\bl\context\cb$}
 \RightLabel{${\land}$l}
 \UnaryInfC{$\Gamma\ob A\land B,X\DC^{a} Y\bl\context\cb$}
 \DisplayProof
 &
 \AxiomC{$\Gamma\ob X\DC^{a} Y,A\bl\context\cb$}
 \AxiomC{$\Gamma\ob X\DC^{a} Y,B\bl\context\cb$}
 \RightLabel{${\land}$r}
 \BinaryInfC{$\Gamma\ob X\DC^{a} Y,A\land B\bl\context\cb$}
 \DisplayProof
 \\[1em]
 \AxiomC{$\Gamma\ob A,X\DC^{a} Y\bl\context\cb$}
 \AxiomC{$\Gamma\ob B,X\DC^{a} Y\bl\context\cb$}
 \RightLabel{(${\lor}$l)}
 \BinaryInfC{$\Gamma\ob A\lor B,X\DC^{a} Y\bl\context\cb$}
 \DisplayProof
 &
 \AxiomC{$\Gamma\ob X\DC^{a} Y,A,B\bl\context\cb$}
 \RightLabel{(${\lor}$r)}
 \UnaryInfC{$\Gamma\ob X\DC^{a} Y,A\lor B\bl\context\cb$}
 \DisplayProof
\\[1em]
\AxiomC{$\Gamma\ob A,X\DC^{a} Y\bl\context \cb \ob A,U\DC^{a} V\bl\contexta\cb$}
\RightLabel{(fc-l)}
\UnaryInfC{$\Gamma\ob A,X\DC^{a} Y\bl\context \cb \ob U\DC^{a} V\bl\contexta\cb$}
\DisplayProof
&
\AxiomC{$\Gamma\ob X\DC^{a} Y,A\bl\context\cb\ob U\DC^{a} V,A\bl\contexta\cb$}
\RightLabel{(fc-r)}
\UnaryInfC{$\Gamma\ob X\DC^{a} Y,A\bl\context\cb\ob U\DC^{a} V\bl\contexta\cb$}
\DisplayProof
\end{tabular}
\end{small}
\end{center}

\caption{The INS calculus~$\INSI$.}
\label{fig_INSI}
\end{figure}

\section{Conclusion and Future work}

While extensions of the sequent data structure permit calculi with the subformula property for more logics, the tradeoff between the subformula property in an extended proof formalism and the utility of such a calculus (for proving metatheoretic results) is still unclear. This provides a motivation for investigating the relationship~\cite{Fit12,GorRam12AIML,Ram15JLC} between various proof formalisms to understand their relative expressive power. The work here can be seen as part of this program, more specifically as a classification of various proof formalisms as subsystems of the labelled sequent formalism.

From a \textit{procedural} perspective, (I)NS were obtained by generalising the traditional sequent. Meanwhile LTS(E) can be seen as specific cases of the labelled sequents. Extending a formalism by generalisation has the advantage of intuition: extend just enough to capture the logic of interest without losing nice syntactic properties. Obtaining a formalism by specialisation opens the possibility of coercing existing results to the new situation as we have done here.

This work (Cor.~\ref{cor-INSK}) and~\cite{GorRam12AIML} indicates that the (I)NS calculus cannot be considered `more syntactic' (and hence preferable from a syntactic/proof-theoretic viewpoint) than a labelled sequent calculus. After all, given the tight correspondence between derivations in (I)NS calculi and LTS(E)-derivations in LS-calculi that we have demonstrated here, these two objects must be equally syntactic. Of course, one may also have a preference of formalism based on other grounds e.g. a preference in notation.

By going beyond the restriction of an LTSE-derivation, it may well be the case that further logics are presentable, and some of these logics might not have any INS calculus with subformula property. If this were the case, then it would indicate that the the general LS derivations are more expressive than LTSE-derivations (equivalently, INS). This is an interesting question and the topic of future work. A similar question was raised recently~\cite{CiaLyoRam18} (replacing LTSE-derivations with the image of derivations from the display calculus) and it was observed that the transformations from the general derivation to the restricted form appear non-trivial (if they exist at all).

%
%

Another topic for future work is the application of these proof calculi to prove results about the logic. A very natural question is that of proving decidability. For example, there is a prominent set of Geach axioms for which decidability is open~\cite{WolZak07,ChaZak97}, namely extensions of~$K$ by axioms from $\{ \Box^{i} p\imp \Box^{j} p | 2\leq i<j \}$. The challenge can be phrased as identifying \textit{when} to terminate a backward proof search for a derivation, and building a countermodel from the terminated proof search.
It would be interesting to investigate if the distinctive features of restricted (LTSE) and unrestricted labelled sequent derivations could be exploited in order to handle these two distinct tasks.






 


\section{Acknowledgements}

The author wishes to thank the anonymous referees for their careful reading and helpful comments on a previous version of this paper.

\bibliographystyle{splncs03}
\bibliography{mybib}

\end{document}